\pdfoutput=1
\documentclass[10pt,a4paper]{article}

\usepackage[a4paper,margin=2cm]{geometry}
\usepackage[T1]{fontenc}
\usepackage[utf8]{inputenc}
\usepackage{microtype}

\usepackage{amsmath,amssymb,amsthm}
\usepackage{aliascnt}
\usepackage{booktabs}
\usepackage{tikz}
\usetikzlibrary{positioning, calc, arrows.meta, decorations.pathreplacing}
\usepackage[noend]{algpseudocode}
\usepackage{algorithm}
\usepackage{float}
\usepackage{enumitem}

\usepackage[hidelinks]{hyperref}
\hypersetup{
  pdftitle={A 5/4 Bound for Graphic s-t Path TSP on Subcubic Graphs},
  pdfauthor={Junho Hwang},
  pdfkeywords={graphic TSP, path TSP, subcubic graphs, Held-Karp
    relaxation, even covers},
}
\usepackage[capitalize,noabbrev]{cleveref}

\theoremstyle{plain}
\newtheorem{theorem}{Theorem}
\newaliascnt{lemma}{theorem}
\newtheorem{lemma}[lemma]{Lemma}
\aliascntresetthe{lemma}
\newaliascnt{corollary}{theorem}
\newtheorem{corollary}[corollary]{Corollary}
\aliascntresetthe{corollary}
\newaliascnt{proposition}{theorem}

\aliascntresetthe{proposition}
\theoremstyle{definition}
\newaliascnt{definition}{theorem}

\aliascntresetthe{definition}
\newaliascnt{remark}{theorem}

\aliascntresetthe{remark}
\newtheorem*{remarkx}{Remark}

\crefname{theorem}{Theorem}{Theorems}
\Crefname{theorem}{Theorem}{Theorems}
\crefname{lemma}{Lemma}{Lemmas}
\Crefname{lemma}{Lemma}{Lemmas}
\crefname{corollary}{Corollary}{Corollaries}
\Crefname{corollary}{Corollary}{Corollaries}
\crefname{proposition}{Proposition}{Propositions}
\Crefname{proposition}{Proposition}{Propositions}
\crefname{definition}{Definition}{Definitions}
\Crefname{definition}{Definition}{Definitions}
\crefname{remark}{Remark}{Remarks}
\Crefname{remark}{Remark}{Remarks}

\crefname{algorithm}{Algorithm}{Algorithms}
\Crefname{algorithm}{Algorithm}{Algorithms}

\newcommand{\OPT}{\mathrm{OPT}}
\newcommand{\exc}{\mathrm{exc}}
\newcommand{\excR}[2]{\mathrm{exc}_{#2}\!\left(#1\right)}
\newcommand{\TSPtour}{\mathrm{tsp}}
\newcommand{\LP}{\mathrm{LP}}
\newcommand{\dfc}{\mathrm{def}}            
\newcommand{\dfce}{\widehat{\mathrm{def}}} 

\title{A \texorpdfstring{$5/4$}{5/4} Bound for Graphic
  \texorpdfstring{$s$--$t$}{s-t} Path TSP on Subcubic Graphs}
\author{Junho Hwang\thanks{Corresponding author}\\[2pt]
\normalsize Independent Researcher, Seoul, South Korea\\
\normalsize\texttt{junhohwang98@gmail.com}}
\date{}

\begin{document}

\maketitle

\begin{abstract}
We study the graphic \(s\)-\(t\) path TSP on subcubic graphs (maximum
degree~\(3\)): given distinct vertices \(s,t\), find a shortest
\(s\)-\(t\) walk that visits every vertex.  We prove an upper bound
with the asymptotically optimal leading coefficient \(5/4\) for every
terminal pair, even when \(G-\{s,t\}\) is disconnected.  Specifically,
every simple \(2\)-connected subcubic graph \(G\) on \(n\) vertices has
a spanning \(s\)-\(t\) walk of length at most
\(\lfloor(5n+n_2(G))/4\rfloor\), where \(n_2(G)\) counts its
degree-\(2\) vertices.  An \(O(n^2)\)-time algorithm attains this
bound.

Combining an edge-rooted even-cover theorem of Wigal, Yoo, and Yu
(WYY) with an even-cover-to-walk lemma proved here yields this bound
for adjacent terminals, a consequence not stated explicitly in their
paper.  We extend the
bound to arbitrary terminal pairs.  For cubic graphs, it becomes
\(\lfloor5n/4\rfloor\), to our knowledge the first direct \(5/4\)
bound for cubic path TSP that does not use the general path-to-tour
reduction.
\end{abstract}

\medskip
\noindent\textbf{Keywords:} graphic TSP, path TSP, subcubic graphs, Held--Karp relaxation, even covers

\section{Introduction}\label{sec:intro}

In the graphic \(s\)-\(t\) \emph{path TSP},\footnote{Solutions are
spanning \emph{walks} and may repeat vertices or edges.  We retain the
standard name \emph{path TSP} from the metric literature
\cite{AKS15,SV14,TVZ20}; Wigal, Yoo, and Yu~\cite{WYY22} use ``TSP
walks'' for the tour problem.} the input is a connected unweighted
graph~\(G\) on \(n\) vertices and distinct terminals \(s,t\).  The
objective is a shortest \(s\)-\(t\) walk that visits every vertex.
Equivalently, \(\OPT(G,s,t)\) is the minimum number of edges in a
connected spanning multigraph whose edges are copies of edges of \(G\)
and whose odd-degree vertex set is \(\{s,t\}\).  We study subcubic
graphs, that is, graphs of maximum degree~\(3\).

\paragraph*{The Wigal--Yoo--Yu Theorem.}
Our main tool is a theorem of Wigal, Yoo, and Yu
(WYY)~\cite{WYY22}, which we state in elementary terms.  An
\emph{even cover} of \(G\) is a
spanning subgraph in which every vertex has degree \(0\) or \(2\); so
it is a disjoint union of cycles together with some isolated vertices.
Its \emph{excess} is
\[
  \exc(F)\;=\;2\cdot(\#\text{cycles of }F)\;+\;(\#\text{isolated vertices of }F).
\]
Fix an edge \(e=uv\) and an even cover \(F\) containing \(e\).
\Cref{lem:walk} gives a spanning \(u\)-\(v\) walk with at most
\(n+\exc(F)-3\) edges: delete \(e\), contract every component of
\(F-e\), and add two copies of the preimages of the edges in a spanning
tree of the contracted graph.  WYY guarantee an even cover whose
excess makes this length at most
\(\lfloor(5n+n_2(G))/4\rfloor-1\).  The walk must, however, join the
endpoints of the root edge.  We state this edge-endpoint consequence
as \cref{lem:rooted-walk}; it combines WYY's excess bound with
\cref{lem:walk} and is not stated in~\cite{WYY22}.

\paragraph*{Our Result.}
In path TSP, the prescribed terminals \(s,t\) need not be adjacent.
Our main theorem extends this bound to every terminal pair, losing
at most one edge in a single parity case.

\begin{theorem}[Subcubic endpoint conversion]\label{thm:endpoint-intro}
Let \(G\) be a simple \(2\)-connected subcubic graph on \(n\ge 3\)
vertices and let \(s,t\in V(G)\) be distinct.  Then
\[
  \OPT(G,s,t)\;\le\;\Bigl\lfloor\tfrac{5n+n_2(G)}{4}\Bigr\rfloor-1 ,
\]
except possibly when \(st\notin E(G)\), \(\deg_G(s)=\deg_G(t)=3\), and
\(n+n_2(G)\equiv2\pmod4\), in which case
\(\OPT(G,s,t)\le\lfloor(5n+n_2(G))/4\rfloor\).  The bound sharpens to
\(\lfloor(5n+n_2(G))/4\rfloor-2\) when \(st\notin E(G)\) and both
terminals have degree~\(2\).
\end{theorem}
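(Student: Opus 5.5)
The plan is to reduce every terminal pair to the rooted case of \cref{lem:rooted-walk} by a local modification of \(G\) that makes \(s\) and \(t\) adjacent, applying the WYY bound to the modified graph and then translating the walk back.

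\textbf{Adjacent terminals.} If \(st\in E(G)\), \cref{lem:rooted-walk} applied with root edge \(e=st\) gives \(\OPT(G,s,t)\le\lfloor(5n+n_2(G))/4\rfloor-1\) directly.

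\textbf{Nonadjacent terminals.} For \(st\notin E(G)\), add the edge \(st\) to obtain \(G'=G+st\). The graph \(G'\) is simple and \(2\)-connected. It is subcubic exactly when \(\deg_G(s),\deg_G(t)\le 2\); since \(G\) is \(2\)-connected, this means both terminals have degree~\(2\). In that case \(n_2(G')=n_2(G)-2\). Take the even cover \(F\) of \(G'\) rooted at \(st\) supplied by WYY and run the contraction argument of \cref{lem:walk} inside \(G\). Deleting the root edge leaves \(F-st\), which consists only of edges of \(G\), so the doubled spanning-tree edges are edges of \(G\) as well, and the resulting \(s\)-\(t\) walk lies in \(G\). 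This gives
\[
\OPT(G,s,t)\le\Bigl\lfloor\tfrac{5n+n_2(G)-2}{4}\Bigr\rfloor-1 .
\]
The claimed sharpening to \(-2\) does not follow from this for every residue of \(n+n_2(G)\). I would therefore rerun the excess bookkeeping using that \(s\) and \(t\) have degree \(2\) in \(G'\), and check whether the WYY excess bound actually gains a full unit. I am not certain this works.

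\textbf{Mixed and degree-\(3\) cases.} If \(s\) (say) has degree \(3\), I would subdivide an edge at \(s\), or suppress degree-\(2\) vertices, to move the terminal. The aim is an auxiliary \(2\)-connected subcubic graph \(H\) with an edge \(ab\) such that a rooted \(a\)-\(b\) walk in \(H\) converts into an \(s\)-\(t\) walk in \(G\).

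\begin{itemize}
\item Choose a neighbour \(x\) of \(s\). Replace \(s\) by the edge \(x t\) after subdividing, so that \(n\) grows by one and \(n_2\) changes by a bounded amount.
\item Alternatively, delete an edge \(sy\) at \(s\), if \(G-sy\) stays \(2\)-connected, and add \(st\). Since \(t\) has degree \(\le 2\) in the mixed case, the result is still subcubic. This keeps \(n\) fixed and raises \(n_2\) by one (\(y\) drops to degree \(2\)), or by two if \(s\) is counted, minus one for \(t\) becoming degree \(3\).
\end{itemize}

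The accounting \((5n+n_2)/4\) changes by \(1/4\) per unit change in \(n_2\) and by \(5/4\) per added vertex, and each unit of \(n_2\) costs only \(1/4\). So I would prefer moves that alter \(n_2\) and not \(n\), and then verify that the floor absorbs the loss except when \(n+n_2\equiv 2\pmod 4\). That residue is precisely the exceptional parity case in the statement, where one edge is lost.

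\textbf{Main obstacle.} The hardest step is guaranteeing that a suitable edge \(sy\) (and \(tz\) when both terminals have degree~\(3\)) can be removed while preserving \(2\)-connectivity. This is delicate exactly when \(G-\{s,t\}\) is disconnected, the case the abstract highlights. My first approach would use an ear decomposition, or the \(2\)-cuts at \(s\), to pick \(y\) inside a block where \(sy\) is not a bridge of \(G-sy\). If no such edge exists, I would fall back to contracting a \(2\)-edge-cut side, applying induction on \(n\), and splicing the walk through the cut at a cost of two edges. For the \(O(n^2)\) algorithm, I would observe that each reduction is local and the WYY cover is computable in \(O(n^2)\).
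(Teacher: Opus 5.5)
Write \(x:=(5n+n_2(G))/4\). The mixed and \(\{3,3\}\) cases are not proved. Your only concrete move, deleting \(sy\) (and \(tz\)) and adding \(st\), requires \(\deg_G(y)=3\), and also \(\deg_G(z)=3\) with \(z\ne y\). If \(\deg_G(y)=2\), then \(y\) becomes a leaf, the auxiliary graph is not \(2\)-connected, and \cref{thm:wyy} does not apply. So the obstruction is not the connectivity subtlety you single out. The move is simply unavailable whenever every neighbour of a degree-\(3\) terminal has degree \(2\). Examples are \(K_{2,3}\) with \(s,t\) its two degree-\(3\) vertices, and the theta graph formed by three internally disjoint \(s\)-\(w\) paths of length \(3\) with \(t\) an interior vertex not adjacent to \(s\). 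Another is the twelve-vertex instance of \cref{app:twelve}, where \(G-\{s,t\}\) is connected yet all six terminal neighbours have degree \(2\). The first bullet does not describe a construction. The fallback (contract a side of a \(2\)-edge-cut, induct, splice at a cost of two edges) leaves open whether the pieces stay simple, \(2\)-connected and subcubic, and how the two extra edges are absorbed. Your parity explanation also does not come from your construction. Where the deletion move is available, it gives length at most \(x-\tfrac32\) in the mixed case. In the \(\{3,3\}\) case (\(n\) fixed, \(n_2\) up by \(2\), \(d=-\tfrac12\)) it would give \(\lfloor x\rfloor-1\) with no exceptional residue at all.

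The missing idea is to split the terminal instead of deleting an edge at it. Replace a degree-\(3\) terminal \(s\) by adjacent vertices \(s_1,s_2\), with two incidences at \(s_1\) and one at \(s_2\). Subdivide \(s_1s_2\) at a new vertex \(p\), root at \(s_2t\), and afterwards contract \(\{s_1,p,s_2\}\) back to \(s\) (\cref{lem:contraction}). Splitting and subdividing preserve \(2\)-connectivity unconditionally (\cref{lem:split-connectivity}), so every choice is admissible. The vertex \(p\) has positive even degree in every connected spanning \(\{s_2,t\}\)-join, so the contraction deletes at least two edge-copies. In the mixed case there are \(n+2\) vertices and still \(n_2(G)\) of degree \(2\), so with \(d=-\tfrac12\) the walk has length at most \(x+\tfrac52-\tfrac12-1-2=x-1\). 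With both terminals of degree \(3\), the gadget at both ends gives \(n+4\) vertices, \(n_2(G)+2\) of degree \(2\), and a loss of at least four; with \(d=-\tfrac12\) this yields only \(x\). The extra half comes from assigning the incidences so that the auxiliary graph minus \(\{s_2,t_2\}\) is connected. This is possible since \(G-\{s,t\}\) has at most three components, each meeting both terminals. It rules out the rooted \(\theta\)-chain equality case of \cref{thm:wyy} and gives \(d=-1\), hence length at most \(x-\tfrac12\). That residual half costs an edge exactly when \(n+n_2(G)\equiv2\pmod4\). Separately, your Case D computation drops the deficit term of \cref{lem:rooted-walk}. \Cref{thm:wyy} gives \(d=-\tfrac12\) for \((G+st,st)\), so the walk has length at most \(\tfrac{5n+n_2(G)-2}{4}-\tfrac12-1=x-2\). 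This is the sharpened bound, with no further bookkeeping needed.
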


For cubic graphs, the theorem gives
\(\OPT(G,s,t)\le\lfloor5n/4\rfloor-1\) whenever \(st\in E(G)\) or
\(n\equiv0\pmod4\), and \(\lfloor5n/4\rfloor\) otherwise.  A natural approach would
combine the WYY tour bound with the path-to-tour reduction of
Traub--Vygen--Zenklusen~\cite{TVZ20}.  For every fixed
\(\varepsilon>0\), their reduction converts an \(\alpha\)-approximation
for tours into a polynomial-time \((\alpha+\varepsilon)\)-approximation
for paths.  It calls the tour algorithm on subgraphs, so a
class-specific guarantee transfers only to classes closed under
subgraphs~\cite{TVZ20}; simple \(2\)-connected subcubic graphs are not
such a class.  Thus WYY alone does not yield even a
\((5/4+\varepsilon)\)-approximation by this route without an additional
decomposition argument.  Moreover, the reduction transfers ratios,
not explicit walk-length bounds of the form \((5n+n_2(G))/4\).

We instead prove the \(5/4\) coefficient directly.  The main task is
to convert arbitrary terminals into the endpoints of a root edge
while preserving the coefficient.  When both terminals have
degree~\(3\), adding \(st\) would violate subcubicity.

The theorem retains the full \((5n+n_2(G))/4\) expression on the stated
subcubic class and applies even when \(G-\{s,t\}\) is disconnected
(\cref{lem:component-aware}).  The algorithm in \cref{cor:algo-wyy}
runs in \(O(n^2)\) time.  For context, van Zuylen~\cite{vanZuylen18}
gives a \(5/4\)-approximation for cubic bipartite tours; our theorem
concerns fixed-endpoint walks on simple \(2\)-connected subcubic
graphs.

\paragraph*{Related Work.}
For metric path TSP, An, Kleinberg, and Shmoys~\cite{AKS15} gave the
first approximation ratio below \(5/3\), and
Traub--Vygen--Zenklusen~\cite{TVZ20} later transferred the tour ratio
to paths up to an additive~\(\varepsilon\).  In the graphic setting,
the technique of M\"omke and Svensson~\cite{MS16} underlies the
\(7/5\)-approximation for TSP and the \(3/2\)-approximation for path TSP
of Seb\H{o} and Vygen~\cite{SV14}.  Traub and Vygen~\cite{TV19} improve
the integrality ratio for graphic \(s\)-\(t\) tours.  These results have
no degree bound; under maximum degree~\(3\), we obtain the explicit
walk-length bound in \cref{thm:endpoint-intro}.

\paragraph*{Proof Idea.}
The reduction depends on the terminal degrees.  In each case,
\cref{lem:accounting} bounds the walk length by
\(\tfrac{5n+n_2(G)}{4}+\rho-1\), so it suffices to bound \(\rho\).
If both terminals have degree~\(2\), adding \(st\) gives \(\rho=-1\)
and the stronger additive term \(-2\).  If exactly one has degree~\(2\),
splitting the other terminal and subdividing the split edge gives
\(\rho=0\).  Neither case requires \(G-\{s,t\}\) to be connected.

If both terminals have degree~\(3\), adding \(st\) would create
degree-$4$ vertices.  Instead, we replace each terminal by two adjacent
vertices, add a root edge \(e^\star\) between the replacements, and
subdivide both split edges.  We apply WYY at \(e^\star\) and contract
back to \(G\).  The auxiliary instance raises the length bound by
\(4+\tfrac12\) edges and the contraction deletes at least four
edge-copies, hence \(\rho=\tfrac12\), which costs one edge only when
\((5n+n_2(G))/4\) is a half-integer.  A component-aware assignment
handles disconnected \(G-\{s,t\}\) (\cref{lem:component-aware}).

The coefficient \(5/4\) is best possible already on cubic graphs.

\begin{theorem}[Tightness of the $5/4$ coefficient]\label{thm:tight-intro}
There are infinitely many simple \(2\)-vertex-connected cubic graphs \(G\)
with an edge \(st\in E(G)\) such that
\[
  \OPT(G,s,t) \ge 5|V(G)|/4 - O(1).
\]
In particular, no bound of the form \(\OPT(G,s,t)\le \alpha n+O(1)\)
with \(\alpha<5/4\) can hold under the hypotheses of
\cref{thm:endpoint-intro}, even in the cubic special case.
\end{theorem}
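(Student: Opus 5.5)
We will prove \cref{thm:tight-intro} by an explicit family of cubic graphs built from many copies of a small gadget strung along a cycle. The gadget forces every spanning walk to pay an additive constant per copy. We use the characterization in the introduction: \(\OPT(G,s,t)\) is the minimum number of edges of a connected spanning multigraph \(H\) on copies of edges of \(G\) whose odd-degree set is \(\{s,t\}\). It therefore suffices to show that every such \(H\) has at least \(5n/4-O(1)\) edges.

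\emph{Gadget.} Take \(K_4\) minus an edge, a diamond on \(6\) edges whose two degree-\(2\) vertices we call its ports. Let \(G_k\) be the cubic graph obtained by arranging \(k\) diamonds \(D_1,\dots,D_k\) in a ring and joining a port of \(D_i\) to a port of \(D_{i+1}\) (indices mod \(k\)). Then \(G_k\) is simple and cubic with \(n=4k\). It is \(2\)-vertex-connected, because each diamond is \(2\)-connected and the ring of connecting edges survives deletion of any single vertex. We choose \(st\) to be one of the ring edges, say the edge between \(D_k\) and \(D_1\).

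\emph{Lower bound per diamond.} Fix a diamond \(D_i\) that does not contain \(s\) or \(t\), so all of its vertices must have even degree in \(H\). Its edge cut consists of the two ring edges at its ports. Each such ring edge is a bridge of \(G_k-st\), hence essentially a bridge for connectivity. We then do a local case analysis on the restriction \(H_i=H[V(D_i)]\) together with the multiplicities \(a,b\) of the two incident ring edges in \(H\). Since \(s,t\notin D_i\), the sum of degrees inside \(D_i\) gives \(a+b\) even. Connectivity forces \(a,b\ge1\), except possibly for ring edges lying on the side that the \(s\)--\(t\) ``outer route'' avoids. The claim is that \(|E(H_i)|+\tfrac12(a+b)\ge 5\) for every diamond except \(O(1)\) of them. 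This claim is a direct consequence of the following facts: the diamond has no Hamiltonian path between its two ports; and an even connected spanning structure on \(4\) vertices entered once from each side needs at least \(4\) internal edges plus the two half-ring-edges, or else uses a doubled edge. Summing over the diamonds, with each ring edge counted half in each of its two neighbouring diamonds, gives \(|E(H)|\ge 5k-O(1)=5n/4-O(1)\).

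\emph{Main obstacle.} The delicate step is the per-diamond case analysis, in particular ruling out that \(H\) enters and exits a diamond with ring-edge multiplicities \((1,1)\) while using only \(3\) internal edges. This would be a Hamiltonian port-to-port path, and one must check it is impossible, since the ports are the two nonadjacent vertices of the diamond. If that check fails for this gadget, I would first replace the diamond by a larger gadget with no Hamiltonian port-to-port path, such as \(K_4\) with one edge subdivided twice and suitable degree repairs. I would calibrate the gadget so that the forced extra cost is exactly a \(1/4\) fraction of its vertices, and then redo the same averaging argument. The ``in particular'' sentence then follows immediately: a bound \(\alpha n+O(1)\) with \(\alpha<5/4\) would contradict \(5n/4-O(1)\) for large \(k\).
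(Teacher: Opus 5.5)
\textbf{There is a genuine gap, and it is fatal for the family you chose.} In the diamond $K_4-pq$ (which has $5$ edges, not $6$), both ports $p,q$ are adjacent to both $x$ and $y$, and $xy$ is an edge. So $p,x,y,q$ is a Hamiltonian port-to-port path. Your per-diamond inequality $|E(H_i)|+\tfrac12(a+b)\ge 5$ therefore fails with $a=b=1$ and three internal edges. In fact, with $st=q_kp_1$ as you chose, $p_1,x_1,y_1,q_1,p_2,\dots,y_k,q_k$ is a Hamiltonian $s$-$t$ path of $G_k$, so $\OPT(G_k,s,t)=n-1$ and the family gives nothing.

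The fallback (swap in a different fixed gadget and keep the ring) cannot be rescued in principle. Let $D$ be any gadget on $m$ vertices with ports $p,q$. Because the ring is $2$-connected, $D+pq$ is a $2$-connected cubic graph. If $pq\in E(D)$ the root edge is parallel, which the WYY conventions allow. The universal bound of \cref{thm:wyy} combined with \cref{lem:walk}, i.e.\ the paper's own Case~A argument, then gives a spanning $p$-$q$ walk inside $D$ of length at most $5m/4-3/2$. Chaining these walks through the ring yields
\[
\OPT(G,s,t)\le k\bigl(\tfrac{5m}{4}-\tfrac32\bigr)+(k-1)=\bigl(\tfrac54-\tfrac1{2m}\bigr)n-1 .
\]
So the forced extra cost per gadget is at most $m/4-1/2$, never the $m/4$ you propose to calibrate to, and the coefficient stays strictly below $5/4$ for every bounded gadget. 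A tight family needs unbounded or nested structure.

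The paper does not build one; the missing idea is a one-line reduction from tours. Dvo\v{r}\'ak, Kr\'al', and Mohar give infinitely many simple $2$-connected cubic graphs with $\TSPtour(G)\ge 5n/4-2$. For any edge $st$, adding one copy of $st$ to a spanning $s$-$t$ walk closes it into a spanning closed walk, so $\OPT(G,s,t)\ge\TSPtour(G)-1\ge 5n/4-3$. Your concluding ``in particular'' step is fine once such a family is in hand.
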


\paragraph*{An Integrality-Gap Window.}
By \cref{lem:lp-identity}, the path Held--Karp LP value satisfies
\(\LP(G,s,t)\ge n-1\).  Since the LP is a relaxation,
\(\OPT(G,s,t)\ge\LP(G,s,t)\).  On
cubic graphs, \cref{thm:endpoint-intro} already bounds the path
Held--Karp gap by \(5/4+O(1/n)\).  On simple cubic \(3\)-edge-connected
graphs we can confine the worst-case asymptotic gap
\(\Gamma_{\mathrm{cub}}\) (defined in \cref{sec:gap-lower}) to a
window.

\begin{theorem}[Two-sided asymptotic gap window]\label{thm:gap-intro}
\[
  \frac98 \;\le\; \Gamma_{\mathrm{cub}} \;\le\; \frac54 .
\]
\end{theorem}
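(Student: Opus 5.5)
The plan is to prove the two inequalities separately, after fixing the definition of \(\Gamma_{\mathrm{cub}}\) from \cref{sec:gap-lower}. We take it to be the limsup, as \(n\to\infty\), of the worst ratio \(\OPT(G,s,t)/\LP(G,s,t)\). The worst case ranges over simple cubic \(3\)-edge-connected graphs \(G\) on \(n\) vertices and over terminal pairs \(s\ne t\).

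\textbf{Upper bound.} A simple cubic \(3\)-edge-connected graph is \(2\)-connected, and it has \(n_2(G)=0\). Hence \cref{thm:endpoint-intro} applies to every terminal pair and gives \(\OPT(G,s,t)\le\lfloor 5n/4\rfloor\). Combining this with \(\LP(G,s,t)\ge n-1\) from \cref{lem:lp-identity}, we get \(\OPT/\LP\le \tfrac{5n/4}{n-1}=\tfrac54+O(1/n)\). Taking the limsup yields \(\Gamma_{\mathrm{cub}}\le 5/4\). This direction is routine.

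\textbf{Lower bound.} We need an infinite family of simple cubic \(3\)-edge-connected graphs \(G_k\) with terminals \(s_k,t_k\) satisfying two conditions:
\begin{itemize}
\item \(\LP(G_k,s_k,t_k)\le n_k+O(1)\), which we certify with an explicit feasible fractional solution;
\item \(\OPT(G_k,s_k,t_k)\ge \tfrac98 n_k-O(1)\).
\end{itemize}
For the LP side, the natural certificate in a cubic \(3\)-edge-connected graph is the uniform vector \(x_e=2/3\). Every cut then gets value at least \(2\), and the total is \(n\). For the path LP we would adjust \(x\) along a short \(s\)-\(t\) path so that \(s\)-\(t\) cuts only need value \(\ge1\). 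Concretely, we lower the weights on a path so that odd-side constraints stay satisfied, which gives \(\LP\le n+O(1)\).

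For \(\OPT\), we would use a gadget-chain construction in the spirit of the tour lower bounds of Karp-type or cubic \(3\)-edge-connected gap examples. Take copies of a fixed gadget \(H\) (for instance, \(K_4\) or a prism with edges subdivided or replaced) and link them in a ring by \(3\)-edge cuts, so that the result is cubic and \(3\)-edge-connected.

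The argument then runs as follows. For a minimum connected spanning multigraph with odd set \(\{s,t\}\), there are two ways a gadget can be handled. If it is traversed using a single cut pair, the parity and connectivity forced inside the gadget cost a fixed number of extra edges. If it is entered through all three cut edges, the cut crossing itself costs extra. A counting argument over gadgets then shows that each gadget of \(v\) vertices incurs excess at least \(v/8\), up to the \(O(1)\) gadgets near \(s\) and \(t\).

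\textbf{Main obstacle.} The hard step is the per-gadget excess lower bound. It requires a careful case analysis of how an optimal multigraph can cross the \(3\)-edge cuts, with parity \(\sum_{v\in S}\deg\equiv|S\cap\{s,t\}|\pmod 2\) forcing an odd or even number of crossings. First we would choose the smallest gadget for which exhaustive case checking, possibly computer-assisted, gives excess ratio exactly \(1/8\). Then we would show that the ring structure makes these excesses additive, since distinct gadgets share no edges except the cut edges, and charge each cut edge to one side.
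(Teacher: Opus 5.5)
\textbf{Upper bound.} Your upper bound matches the paper's argument: the cubic case of \cref{thm:endpoint-intro} divided by \cref{lem:lp-identity}. It is correct.

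\textbf{Lower bound: the $\OPT$ side.} Here there is a genuine gap, and it is the step you call the main obstacle. You never specify a gadget. You never show that some gadget forces excess at least $v/8$ under every crossing pattern (two cut edges, or all three with some doubled). You also never justify that these excesses add up when cut edges are shared between neighbouring gadgets. A gadget with a Hamiltonian path between two of its attachment vertices contributes no excess at all. So everything rests on a specific, carefully chosen gadget and a careful charging argument. Producing a cubic $3$-edge-connected family with $\TSPtour\ge(9/8-o(1))n$ is essentially the main result of Luko\v{t}ka--Maz\'ak; a plan to search for such a gadget does not prove it.

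The missing idea is that you need no new family. Take their $3$-connected cubic graphs $G_k$, which are $3$-edge-connected and so lie in the class, and choose \emph{adjacent} terminals $s_kt_k\in E(G_k)$. Adding one copy of $s_kt_k$ to any connected spanning $\{s_k,t_k\}$-join gives a spanning Eulerian multigraph. Hence $\OPT(G_k,s_k,t_k)\ge\TSPtour(G_k)-1$, and the known tour lower bound transfers directly.

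\textbf{Lower bound: the LP side.} Your LP certificate does not work as described. Subtracting weight along an $s$-$t$ path of length at least $2$ lowers the degree of every internal path vertex by twice the amount subtracted, which violates $x(\delta(v))=2$. At adjacent terminals you cannot subtract a unit from $x_{st}=2/3$ either.

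The paper's certificate (\cref{lem:adjacent-lp-exact}) goes as follows.
\begin{enumerate}[label=(\roman*)]
\item Pick a perfect matching $M\ni st$; one exists because every edge of a bridgeless cubic graph lies in a perfect matching.
\item Set $y=\chi^M+\tfrac12\chi^{E(G)\setminus M}$.
\item Check cuts: if $|\delta_G(S)|=3$, then $|S|$ is odd, so $|M\cap\delta_G(S)|$ is odd and $y(\delta(S))\ge 2$. Larger cuts are immediate.
\item Take $x=y-\chi^{st}$. This lowers only the degrees of $s,t$ and the $s$-$t$ cuts, each by one, so $\LP(G,s,t)=n-1$ exactly.
\end{enumerate}
With these two ingredients the ratio is at least $\bigl((9/8-o(1))n_k-1\bigr)/(n_k-1)$, which gives $\Gamma_{\mathrm{cub}}\ge 9/8$.
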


\noindent
The upper bound is \cref{thm:endpoint-intro}; the lower bound combines
the cubic tour examples of Luko\v{t}ka--Maz\'ak~\cite{LM18} with an
exact LP identity \(\LP(G,s,t)=n-1\) at adjacent terminals
(\cref{lem:adjacent-lp-exact}).  The exact value of
\(\Gamma_{\mathrm{cub}}\) is open.

\section{Preliminaries}\label{sec:prelim}

Throughout, ``$2$-connected'' means $2$-vertex-connected, and
edge-connectivity is always stated explicitly, as in
``$3$-edge-connected''.  All graphs are simple
unless stated otherwise.  For a graph $G$, $n(G):=|V(G)|$, $n_2(G)$
denotes the number of degree-$2$ vertices of $G$, and $d_G(u,v)$ is
the number of edges of a shortest $u$-$v$ path in $G$.  For the input
graph we abbreviate $n:=n(G)$; all other graphs carry explicit
arguments, as in $n(H)$, $n_2(K)$.

Let $G=(V,E)$ be a connected unweighted graph and let $s,t\in V$ be
distinct.
For distinct $u,v\in V$, a \emph{$\{u,v\}$-join} is a multigraph whose
edges are copies of edges of $G$ and whose set of odd-degree vertices
is exactly $\{u,v\}$; a connected spanning $\{u,v\}$-join admits an
Euler trail from $u$ to $v$.  We write $\OPT(G,s,t)$ for the graphic
$s$-$t$ path TSP optimum.  This is the minimum size of a connected
spanning $\{s,t\}$-join of $G$; equivalently, it is the length of a
shortest spanning $s$-$t$ walk in $G$.

For the lower bounds we use the \emph{path Held--Karp LP} on the
graphic metric of $G$.  Its variables are indexed by
$\binom{V}{2}$, and the cost of $uv$ is
$c_{uv}:=d_G(u,v)$.  Thus $c_e=1$ for $e\in E(G)$ and $c_e\ge2$
otherwise.  For $S\subseteq V$, let $\delta(S)$ be the corresponding
cut of the complete graph (and $\delta(v):=\delta(\{v\})$), and set
$x(F):=\sum_{e\in F}x_e$.  The LP is
\begin{align*}
  \LP(G,s,t) = \min\ & \textstyle\sum_{e\in\binom{V}{2}} c_e \, x_e \\
  \text{s.t.}\
    & x(\delta(v)) = 1\ \text{for}\ v \in \{s, t\},\quad
      x(\delta(v)) = 2\ \text{for}\ v \in V \setminus \{s, t\},\\
    & x(\delta(S)) \ge 1\ \text{for}\ \emptyset\ne S \subsetneq V,\;
      |S \cap \{s, t\}| = 1, \\
    & x(\delta(S)) \ge 2\ \text{for}\ \emptyset\ne S \subsetneq V,\;
      |S \cap \{s, t\}| \ne 1, \\
    & 0 \le x_e \le 1\ \text{for every}\ e\in\tbinom{V}{2}.
\end{align*}
Its integer optimum is $\OPT(G,s,t)$.  Following~\cite{SV14,TVZ20}, we include the box
constraints $0\le x_e\le 1$; our bounds do not depend on them.

\begin{lemma}\label{lem:lp-identity}
For every connected graph $G$ on $n$ vertices and distinct $s,t$, the path
Held--Karp LP on the graphic metric satisfies $\LP(G,s,t)\ge n-1$.
\end{lemma}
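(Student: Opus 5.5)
The plan is to lower-bound the objective of any feasible LP solution $x$ using only the degree constraints and the fact that every cost is at least one. Since $c_e=d_G(u,v)\ge 1$ for every pair $e=uv\in\binom{V}{2}$ (the terminals and all vertices are distinct, and $G$ is connected so all distances are finite) and $x_e\ge 0$, we get
\[
  \sum_{e\in\binom{V}{2}} c_e x_e \;\ge\; \sum_{e\in\binom{V}{2}} x_e .
\]
So it suffices to show that the total weight $x(\binom{V}{2})$ equals $n-1$ for every feasible $x$.

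For the second step, sum the degree equalities over all vertices. Each pair $e=uv$ lies in exactly two cuts $\delta(u)$ and $\delta(v)$, so
\[
  2\,x\Bigl(\tbinom{V}{2}\Bigr)=\sum_{v\in V} x(\delta(v)) = 1+1+2(n-2)=2n-2 .
\]
Hence $x(\binom{V}{2})=n-1$, and every feasible point has objective at least $n-1$. Taking the minimum gives $\LP(G,s,t)\ge n-1$. The LP is feasible, since any spanning $s$-$t$ walk yields an integer point after shortcutting in the metric closure, so the minimum is well defined.

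There is essentially no obstacle. The only points to check are that the cut constraints and box constraints are not needed (they only shrink the feasible region, which can only raise the minimum), and that $c_e\ge1$ holds for all pairs, including non-edges, which follows from the definition of the graphic metric. Equality $c_e=1$ forces the support to lie in $E(G)$, which is presumably what the later exactness lemma \cref{lem:adjacent-lp-exact} exploits, but that refinement is not needed here.
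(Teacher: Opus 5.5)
Your proposal is correct and follows the paper's proof: summing the degree equalities gives $\sum_e x_e=n-1$ for every feasible $x$, and $c_e\ge1$ on the graphic metric then yields $\LP(G,s,t)\ge n-1$. Your extra remark on feasibility via a Hamiltonian $s$-$t$ path in the metric closure is harmless and not needed for the inequality.
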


\begin{proof}
For any feasible $x$, summing the singleton degree constraints gives
$2\sum_e x_e=\sum_v x(\delta(v))=2(n-2)+2=2(n-1)$, so $\sum_e x_e=n-1$;
since $c_e\ge 1$ on the graphic metric, $\LP(G,s,t)\ge n-1$.
\end{proof}

\section{The Endpoint-Conversion Theorem}\label{sec:wyy}

This section proves \cref{thm:endpoint-intro,thm:tight-intro}.  We use
two results from WYY: the existential deficit bound with its equality
case (\cref{thm:wyy}), and the constructive \textsc{Scan} guarantee
(\cref{lem:wyy-constructive}).  Our \cref{lem:walk} converts an
edge-rooted even cover into a spanning walk; combining it with the
deficit bound gives \cref{lem:rooted-walk}.

\subsection{The Wigal--Yoo--Yu Excess Theorem}

Recall the even cover and its excess $\exc(F)=2c(F)+i(F)$ from the
introduction, where $c(F),i(F)$ count the cycles and isolated
vertices of $F$.  For an edge $e$ that lies in an even cover, let
$\mathcal{E}(G,e)$ be the set of even covers containing $e$ and put
$\excR{G}{e}:=\min_{F\in\mathcal{E}(G,e)}\exc(F)-2$.  The subtracted $2$
is the contribution of the cycle through~$e$ that is forced in every
$F\in\mathcal{E}(G,e)$, so $\excR{G}{e}$ measures excess \emph{beyond}
that unavoidable cycle.

WYY compare $\excR{G}{e}$ with the threshold
$(n(G)+n_2(G))/4$.  Define the \emph{rooted-excess deficit} by
\[
  \dfc(G,e):=\excR{G}{e}-\tfrac{n(G)+n_2(G)}{4}.
\]
Under the hypotheses of \cref{thm:wyy}, this quantity is a
half-integer and is at most $-1/2$.

\Cref{lem:walk} converts an even cover $F\in\mathcal{E}(G,e)$ into a
spanning walk of $G-e$ between the endpoints of $e$, of length at
most $n(G)+\exc(F)-3$.  Substituting
$\excR{G}{e}=\tfrac{n(G)+n_2(G)}{4}+\dfc(G,e)$ into this bound yields
the $5/4$ coefficient because
$n(G)+\tfrac{n(G)+n_2(G)}{4}=\tfrac{5n(G)+n_2(G)}{4}$.
In particular, decreasing the deficit by $1/2$ decreases the resulting
upper bound by $1/2$.  We state the combined bound in
\cref{lem:rooted-walk}.

We single out the equality case of \cref{thm:wyy}, which recurs in
our reductions (\cref{fig:theta-chain}).  A \emph{subcubic chain} from $u$ to $v$ is a
simple connected subcubic graph obtained by joining a sequence of
blocks by cut-edges.  Each block is either a $2$-connected subcubic
graph or a single vertex, and $u$ and $v$ are joined to the first and
last blocks by single \emph{chain edges}~\cite[Fig.~1]{WYY22}.  A \emph{rooted
$\theta$-chain} is a pair $(G,e)$, $e=uv$, such that $G-e$ is the
edge-disjoint union of two internally vertex-disjoint subcubic chains
from $u$ to $v$; together with $e$, they form three internally disjoint $u$-$v$
connections --- hence ``$\theta$''.

\begin{figure}[!htbp]
\centering
\begin{tikzpicture}[
  root/.style={circle, fill=black, inner sep=1.6pt},
  block/.style={draw, rounded corners=2pt, fill=black!4,
    minimum width=1.05cm, minimum height=0.52cm, inner sep=2pt,
    font=\scriptsize},
  lab/.style={font=\footnotesize},
  x=1cm, y=1cm,
]
\node[root, label={[lab]left:\(u\)}] (u) at (0,0) {};
\node[root, label={[lab]right:\(v\)}] (v) at (8,0) {};

\node[block] (a1) at (1.7,0.75) {\(B^+_1\)};
\node[block] (a2) at (4.0,0.75) {\(B^+_2\)};
\node[block] (a3) at (6.3,0.75) {\(B^+_3\)};
\draw (u) -- (a1) -- (a2) -- (a3) -- (v);

\node[block] (b1) at (1.7,-0.75) {\(B^-_1\)};
\node[block] (b2) at (4.0,-0.75) {\(B^-_2\)};
\node[block] (b3) at (6.3,-0.75) {\(B^-_3\)};
\draw (u) -- (b1) -- (b2) -- (b3) -- (v);

\draw[very thick] (u) -- (v) node[midway, fill=white, inner sep=2pt,
  font=\footnotesize] {\(e\)};
\node[lab, above] at (4.0,1.13) {chain \(\mathcal C^+\)};
\node[lab, below] at (4.0,-1.13) {chain \(\mathcal C^-\)};
\end{tikzpicture}
\caption{A rooted $\theta$-chain $(G,e)$, the equality case of
\cref{thm:wyy}.  Deleting the root edge $e=uv$ leaves two internally
vertex-disjoint subcubic chains $\mathcal C^+,\mathcal C^-$ from $u$
to $v$.  Each rounded box represents a block; nontrivial blocks are
$2$-connected, singleton blocks are also allowed, consecutive blocks
are joined by cut-edges, and the
edges at $u$ and $v$ are the chain edges.}
\label{fig:theta-chain}
\end{figure}

We distinguish two WYY bounds: the existential deficit $\dfc(G,e)$,
defined by minimising over all even covers, and the computable
estimate $\dfce(G,e)$ returned by WYY's linear-time \textsc{Scan}
procedure, which satisfies $\dfc(G,e)\le\dfce(G,e)\le-1/2$ and
controls the $O(n^2)$ construction.

In the following quoted result, we use the conventions of WYY: the
root edge may be a loop or have a parallel copy even though $G-e$ is
simple, and the one-vertex loop is admitted as a base instance.

\begin{theorem}[Wigal--Yoo--Yu~{\cite[Thm.~2.4~(T1)]{WYY22}}, in the form we use]\label{thm:wyy}
Let $G$ be a $2$-connected subcubic graph and $e=uv\in E(G)$ with
$G-e$ simple.  Then $\dfc(G,e)$ is a half-integer with
$\dfc(G,e)\le-1/2$, and equality holds only if $G$ is a single-vertex
loop or $(G,e)$ is a rooted $\theta$-chain.  In particular,
$\dfc(G,e)\le-1$ whenever $G$ is not a loop and $(G,e)$ is not a
rooted $\theta$-chain.
\end{theorem}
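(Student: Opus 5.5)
The plan is to obtain \cref{thm:wyy} by translating \cite[Thm.~2.4~(T1)]{WYY22} into the notation of this paper, and then to derive the half-integrality claim and the final ``in particular'' clause by a short parity argument. The quoted result is not reproved from scratch.

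\emph{Matching normalisations.} WYY state (T1) for a rooted pair $(G,e)$ with $G$ $2$-connected and subcubic, $G-e$ simple, and the root edge allowed to be a loop or to have a parallel copy. The first step is to check that their rooted excess agrees with $\excR{G}{e}=\min_{F\in\mathcal{E}(G,e)}\exc(F)-2$. Concretely, I must confirm that their weighting (cycles count $2$, isolated vertices count $1$) is the one used here, and that the cycle through $e$ is discounted exactly once. I must also confirm that their threshold $(n(G)+n_2(G))/4$ uses the same vertex and degree counts, with a loop contributing $2$ to the degree of its vertex. Once this is done, (T1) reads $\excR{G}{e}\le \tfrac{n(G)+n_2(G)}{4}-\tfrac12$, which is $\dfc(G,e)\le -1/2$. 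Their equality characterisation (the one-vertex loop or the chain structure of \cite[Fig.~1]{WYY22}) is by definition the rooted $\theta$-chain case defined above.

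\emph{Half-integrality.} Since $\exc(F)$ is an integer, it suffices to show that $n(G)+n_2(G)$ is even. Apart from the loop base case, a $2$-connected graph has no vertices of degree $\le 1$, so $n(G)=n_2(G)+n_3(G)$. The handshake lemma gives $2n_2(G)+3n_3(G)$ even, hence $n_3(G)$ is even. Therefore $n(G)+n_2(G)=2n_2(G)+n_3(G)$ is even, so $\tfrac{n(G)+n_2(G)}{4}\in\tfrac12\mathbb{Z}$, and consequently $\dfc(G,e)\in\tfrac12\mathbb{Z}$.

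For the loop base case I will compute directly. The single vertex has degree $2$, so $n+n_2=2$. The unique even cover is the loop itself, so $\excR{G}{e}=0$ and $\dfc=-1/2$. When the root is a parallel edge, the handshake count still holds with multiplicities.

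\emph{The final clause.} Suppose $G$ is not a loop and $(G,e)$ is not a rooted $\theta$-chain. Then $\dfc(G,e)$ is a half-integer, satisfies $\dfc(G,e)\le -1/2$, and differs from $-1/2$. The next admissible value is $-1$, so $\dfc(G,e)\le -1$.

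The main obstacle is the first step. WYY's conventions may differ in small ways, for example whether the forced cycle is subtracted, whether degree-$2$ vertices are counted in $G$ or in $G-e$, and how the loop and parallel-edge degenerate instances are admitted. Any such offset would shift $\dfc$ by a constant and break the $-1/2$ threshold. I would first verify the normalisation on the two smallest instances: the loop, and $K_4$ with $e$ an edge, which is a rooted $\theta$-chain with singleton blocks, so $\dfc$ should equal $-1/2$ there.
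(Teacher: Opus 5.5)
Your route is the paper's. The statement is quoted from \cite[Thm.~2.4~(T1)]{WYY22} rather than reproved. Your half-integrality argument is the parity fact the paper proves separately as \cref{lem:parity-even}, and the final clause follows from half-integrality exactly as you say.

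The one concrete error is in the calibration step you call the main obstacle, and it would mislead you. The pair $(K_4,e)$ with $e=uv$ is \emph{not} a rooted $\theta$-chain. In $K_4-e$ the two edges at $u$ go to $w$ and $x$, so $w$ and $x$ would have to lie in different chains, yet the edge $wx$ joins the two interiors. Its deficit is also $-1$, not $-1/2$. The Hamiltonian cycle $uvwxu$ contains $e$ and has $\exc=2$, so $\excR{K_4}{e}=0$ against the threshold $(4+0)/4=1$. Calibrating on $K_4$, you would detect a spurious offset of $1/2$ and be led to alter a normalisation that is in fact correct.

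A valid equality check is $K_4$ minus the edge opposite $e$, that is, $e$ together with two $u$--$v$ paths of length two. There the only even covers through $e$ are two triangles, each leaving one vertex isolated. So $\excR{G}{e}=1$ against $(4+2)/4=3/2$, and $\dfc(G,e)=-1/2$.

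Note also that the statement says equality holds \emph{only if} $(G,e)$ is a loop or a rooted $\theta$-chain. The converse fails, so your inference from being a $\theta$-chain to $\dfc=-1/2$ is not licensed. For example, take $e=uv$ together with the paths $uwv$ and $uxyv$. This is a rooted $\theta$-chain with single-vertex blocks. However, the $4$-cycle $uvyxu$ with $w$ isolated has $\exc=3$, so $\excR{G}{e}=1$ against $(5+3)/4=2$, giving $\dfc(G,e)=-1$.
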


All our reductions use only \cref{thm:wyy}: Cases~A, C, and~D below
use the universal bound $\dfc\le-1/2$, and Case~B uses the refinement
$\dfc\le-1$, whose hypothesis \cref{lem:component-aware}
establishes.\footnote{\Cref{app:citations} tabulates the statements
of~\cite{WYY22} used in this paper, \cref{app:scan} records why the
\textsc{Scan} estimate satisfies the last sentence of
\cref{lem:wyy-constructive}, and \cref{app:impl} collects
implementation notes for \cref{alg:pathtsp}.}

\begin{remarkx}
Statement~(T3) of \cite[Thm.~2.4]{WYY22} also lists the instances
with $\dfc(G,e)=-1$, and an earlier version of this paper used it.
The list is incomplete: if $G_1$ arises from $K_4$ on
$\{u,v,z_1,z_2\}$ by subdividing $uz_1$ and $uz_2$, then every even
cover of $G_1$ through $uv$ is a $4$-cycle plus two isolated vertices
or a $5$-cycle plus one, so $\dfc(G_1,uv)=1-2=-1$, although
$G_1\not\cong K_4$, $uv$ has no parallel edge and lies in no
$2$-edge-cut, and $G_1-\{u,v\}$ is connected.  We rely only on~(T1),
whose proof does not use~(T3).
\end{remarkx}

\begin{lemma}[{\cite[Cor.~6.8]{WYY22}}]\label{lem:wyy-constructive}
Let $(G,e)$ be as in \cref{thm:wyy}.  In $O(n(G))$ time, WYY's
$\textup{\textsc{Scan}}$ computes a half-integer estimate $\dfce(G,e)$ with
$\dfc(G,e)\le\dfce(G,e)\le-1/2$.  A companion WYY routine outputs, in
$O(n(G)^2)$ time, an even cover $F\in\mathcal{E}(G,e)$ whose excess
satisfies
\[
  \exc(F)\le (n(G)+n_2(G))/4+\dfce(G,e)+2 .
\]
If $G-e$ is $2$-connected and $(G,e)$ is not a rooted $\theta$-chain,
then $\dfce(G,e)\le -1$.
\end{lemma}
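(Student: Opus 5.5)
The first two sentences of \cref{lem:wyy-constructive} are a direct transcription of \cite[Cor.~6.8]{WYY22}, so I would not reprove them. I would only check that the conventions agree. WYY's normalisation of excess and ours differ by the forced root cycle, hence the ``$+2$'' in the displayed inequality. The admissible inputs (root edge possibly a loop or parallel, $G-e$ simple) are those of \cref{thm:wyy}. The inequality $\dfc(G,e)\le\dfce(G,e)$ holds because $\dfce$ is, by construction, the deficit of an explicit even cover in $\mathcal{E}(G,e)$. The upper bound $\dfce\le-1/2$ is the constructive form of (T1). The real content is the last sentence, which WYY do not state for \textsc{Scan}. I would prove it by induction on $n(G)$, following the recursion of \textsc{Scan}.

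The plan is to establish the stronger claim that $\dfce(G,e)=-1/2$ implies that $G$ is a loop or that $(G,e)$ is a rooted $\theta$-chain. The lemma then follows. If $G-e$ is $2$-connected, $G$ is not a loop. If moreover $(G,e)$ is not a $\theta$-chain, the estimate cannot equal $-1/2$; since it is a half-integer at most $-1/2$, it is at most $-1$.

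\textsc{Scan} processes $(G,e)$ by one of finitely many local reductions, for example a cut-edge or $2$-edge-cut of $G-e$ at the root, a $2$-vertex cut, or suppression of degree-$2$ vertices. Each reduction produces smaller rooted instances $(G_i,e_i)$ and sets $\dfce(G,e)=\sum_i\dfce(G_i,e_i)+\beta$ for a reduction-specific constant $\beta$. The proof of (T1) in WYY already gives $\beta\le$ the slack needed to reach $-1/2$. For each reduction type I would then do two things:
\begin{enumerate}[label=(\roman*)]
\item Identify when the sum attains $-1/2$. This should force every child to be tight, i.e.\ $\dfce(G_i,e_i)=-1/2$, and $\beta$ to take its extremal value.
\item Apply induction to each child, so each $(G_i,e_i)$ is a loop or a rooted $\theta$-chain. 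Then check that regluing tight children along an extremal reduction yields a rooted $\theta$-chain again.
\end{enumerate}
Intuitively, concatenating blocks by cut-edges extends a chain, and gluing at the root splices two chains. This mirrors WYY's existential equality analysis, but it is run on the computed estimate rather than on $\dfc$.

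The main obstacle is step (i) for the reductions in which \textsc{Scan} makes a non-optimal but cheap choice. There the estimate may stay at $-1/2$ even though $\dfc<-1/2$, so the equality structure need not pass to $(G,e)$ itself. This is exactly where the hypothesis that $G-e$ is $2$-connected should be used. I would show that the lossy reductions are triggered only by a cut-vertex or cut-edge of $G-e$, i.e.\ by a chain-type decomposition. Under $2$-connectivity of $G-e$, the first step of \textsc{Scan} is then a reduction whose equality case is faithful. It remains to verify that the recursively produced children inherit enough structure for the induction to apply.

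If this case check resists a uniform treatment, I would fall back on an algorithmic remedy. When $\dfce(G,e)=-1/2$ while $G-e$ is $2$-connected, test in linear time whether $(G,e)$ is a rooted $\theta$-chain. If it is not, invoke the existential bound $\dfc\le-1$ from \cref{thm:wyy} and locate the improving cover by one local exchange. This keeps the $O(n(G)^2)$ budget.
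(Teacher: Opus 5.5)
Your handling of the first two sentences as a transcription of \cite[Cor.~6.8]{WYY22} matches the paper. The gap is in the last sentence, which is the only part you try to prove. Your induction over the recursion of \textsc{Scan} is a plan rather than a proof. The reductions and the additive rule $\dfce(G,e)=\sum_i\dfce(G_i,e_i)+\beta$ are guessed, not taken from WYY; in WYY the \textsc{Scan} value is one of $-\tfrac12,-1,-\tfrac32$ and selects a branch of the even-cover routine. Step~(i), which is the entire content, is left open.

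The obstacle you identify also undermines your own inductive hypothesis. Suppose a lossy reduction can return $-\tfrac12$ from tight $\theta$-chain children while $(G,e)$ itself is not a $\theta$-chain. Then your unrestricted claim ``$\dfce=-\tfrac12$ implies loop or $\theta$-chain'' is false for that instance. If you instead restrict the hypothesis to instances with $G-e$ $2$-connected, you have not shown that the children $(G_i,e_i)$ satisfy it. Either way the induction does not close.

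The fallback does not repair this. It concerns a modified algorithm, whereas the lemma is a statement about WYY's estimate $\dfce$. Moreover, ``locate the improving cover by one local exchange'' has no support. \Cref{thm:wyy} is purely existential, and nothing guarantees that a cover of deficit at most $-1$ is one exchange away from the cover produced in the $\Delta=-\tfrac12$ branch.

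The missing ingredient is that no new argument is needed. The paper (\cref{app:scan}) reads the last sentence directly off WYY. Their even-cover routine $\mathtt{EC}(G,e,\Delta)$ has exactly three branches, indexed by $\Delta\in\{-\tfrac12,-1,-\tfrac32\}$. Their input specification for the $\Delta=-\tfrac12$ branch \cite[Alg.~4]{WYY22} states that, when $G-e$ is simple and $2$-connected, this branch is entered precisely for rooted $\theta$-chains. Since $\Delta$ takes only these three values, failing to be a $\theta$-chain forces $\Delta\le-1$. So the whole lemma, including its last sentence, rests on citation. To make your route work you would have to reproduce WYY's case analysis of \textsc{Scan}, which your proposal only sketches.
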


\subsection{From an Edge-Rooted Even Cover to a Spanning Walk}

\begin{lemma}\label{lem:walk}
Let $G$ be a simple subcubic graph, let $e=uv\in E(G)$, and suppose
that $G-e$ is connected.  For every $F\in\mathcal{E}(G,e)$, the graph
$G-e$ admits a
spanning $u$-$v$ walk of length at most
$n(G) + \exc(F) - 3$.  Consequently,
$\OPT(G - e, u, v) \le n(G) + \excR{G}{e} - 1$.
\end{lemma}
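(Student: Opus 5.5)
The construction follows the recipe sketched in the introduction. Fix $F\in\mathcal{E}(G,e)$, let $C$ be the cycle of $F$ through $e$, and write $c=c(F)$, $i=i(F)$. We build a connected spanning $\{u,v\}$-join of $G-e$ from three ingredients:
\begin{itemize}
\item the edges of $F-e$, each used once;
\item a set $T$ of connecting edges of $G-e$;
\item one extra copy of every edge of $T$.
\end{itemize}

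\emph{Parity.} In $F-e$, the path $C-e$ has its odd-degree vertices exactly at $u$ and $v$. Every other vertex has degree $0$ or $2$. Doubling edges does not change parity, so the odd-degree set of the multigraph is exactly $\{u,v\}$.

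\emph{Connectivity.} The components of $F-e$ are:
\begin{itemize}
\item the path $C-e$;
\item the $c-1$ remaining cycles;
\item the $i$ isolated vertices.
\end{itemize}
That gives $k=c+i$ components. Contract each of them in $G-e$. The contracted multigraph is connected because $G-e$ is connected. Pick a spanning tree of it, which has $k-1$ edges, and let $T$ be the set of preimage edges in $G-e$. Then $F-e$ together with two copies of $T$ is connected and spanning. It therefore admits an Euler trail from $u$ to $v$, which is a spanning $u$-$v$ walk in $G-e$.

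\emph{Counting.} The walk has length $|E(F)|-1+2(k-1)$. Every non-isolated vertex of $F$ has degree $2$, so $|E(F)|=n(G)-i$. The length is then
\[
n(G)-i-1+2c+2i-2=n(G)+(2c+i)-3=n(G)+\exc(F)-3 .
\]

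\emph{Consequence.} Choose $F\in\mathcal{E}(G,e)$ minimising the excess, so $\exc(F)=\excR{G}{e}+2$. This gives $\OPT(G-e,u,v)\le n(G)+\excR{G}{e}-1$. It needs $\mathcal{E}(G,e)\ne\emptyset$, which is implicit in the definition of $\excR{G}{e}$.

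The steps are routine. The one point to watch is that the path $C-e$ must count as a single component rather than being split. Its endpoints $u$ and $v$ are distinct, since $G$ is simple, so $C-e$ is a genuine path, which settles this. Subcubicity is never used; it enters only through the WYY bound in later lemmas.
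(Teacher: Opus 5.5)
Your proof is correct and is essentially the paper's argument: contract the $c+i$ components of $F-e$ in $G-e$, lift a spanning tree of the quotient to a set $T$ of $c+i-1$ edges, and count $(n(G)-i-1)+2(c+i-1)=n(G)+\exc(F)-3$ for $(F-e)\cup 2T$. One small remark: deleting an edge from a cycle never disconnects it, so the fact $u\ne v$ (from simplicity) is what guarantees the odd-degree set is exactly $\{u,v\}$, not what keeps $C-e$ in one piece.
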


\begin{proof}
Let $c:=c(F)$ and $i:=i(F)$ be the numbers of cycles and isolated
vertices of $F$, so $|E(F)|=n(G)-i$.  Deleting $e$ turns the cycle of
$F$ containing~$e$ into a $u$-$v$ path.  Thus $F-e$ has $c+i$
components, $|E(F-e)|=n(G)-i-1$, and odd-degree set $\{u,v\}$.

Contract every component of $F-e$ in $G-e$ and delete the resulting
loops.  The resulting multigraph $Q$ is connected and has $c+i$
vertices.  Choose a spanning tree of $Q$ and, for each tree edge,
choose a preimage in $G-e$.  Let $T$ be the resulting set of $c+i-1$
edges.  Every edge of $T$ lies outside $F$.  Adding two copies of every
edge of $T$ to $F-e$ yields a connected spanning $\{u,v\}$-join $W$
with
\[
  |E(W)| = (n(G) - i - 1) + 2(c + i - 1) = n(G) + \exc(F) - 3 .
\]
An Euler trail of $W$ from $u$ to $v$ has this length.  Since
$\min_{F\in\mathcal{E}(G,e)}\exc(F)=\excR{G}{e}+2$, minimising over
$F$ gives the stated bound.
\end{proof}

The next lemma records the edge-endpoint walk bound obtained by
combining \cref{lem:walk,thm:wyy}.  WYY do not state this consequence:
their theorem concerns rooted excess rather than \(s\)-\(t\) walks.

\begin{lemma}[Rooted walk bound]\label{lem:rooted-walk}
Let $H$ be a simple $2$-connected subcubic graph and let
$e=uv\in E(H)$.  If $\dfc(H,e)\le d$, then $H-e$ admits a spanning
$u$-$v$ walk of length at most $(5n(H)+n_2(H))/4+d-1$.
\end{lemma}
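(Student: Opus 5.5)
The plan is to combine \cref{lem:walk} with the definition of the deficit. The work lies in checking that the hypotheses of \cref{lem:walk} hold for $H$ and $e$, and that $\mathcal{E}(H,e)$ is nonempty.

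\textbf{Hypotheses of \cref{lem:walk}.} Since $H$ is $2$-connected, it is in particular $2$-edge-connected when $n(H)\ge3$. Hence $H-e$ is connected.

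\textbf{Nonemptiness of $\mathcal{E}(H,e)$.} By $2$-connectivity (Menger), $e$ lies on a cycle $C$. The cover consisting of $C$ together with all vertices off $C$ as isolated vertices is an even cover containing $e$. So $\mathcal{E}(H,e)\neq\emptyset$, and $\excR{H}{e}$ is a well-defined finite minimum. The degenerate case where $H$ has fewer than three vertices cannot occur, because $H$ is simple and $2$-connected with an edge; I will note this explicitly.

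\textbf{Main computation.} \Cref{lem:walk} gives
\[
\OPT(H-e,u,v)\le n(H)+\excR{H}{e}-1 .
\]
By definition, $\excR{H}{e}=\tfrac{n(H)+n_2(H)}{4}+\dfc(H,e)\le \tfrac{n(H)+n_2(H)}{4}+d$. Substituting and using $n(H)+\tfrac{n(H)+n_2(H)}{4}=\tfrac{5n(H)+n_2(H)}{4}$ yields the walk length bound $(5n(H)+n_2(H))/4+d-1$. The walk produced by \cref{lem:walk} is a spanning $u$-$v$ walk in $H-e$, which is exactly what the statement requires.

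\textbf{Main obstacle.} The only subtlety is bookkeeping, namely checking that the ``$-2$'' in the definition of $\excR{H}{e}$ matches the ``$-3$'' versus ``$-1$'' in \cref{lem:walk}. The minimising even cover has $\exc(F)=\excR{H}{e}+2$, so $n(H)+\exc(F)-3=n(H)+\excR{H}{e}-1$, which is consistent. No appeal to \cref{thm:wyy} is needed beyond the hypothesis $\dfc(H,e)\le d$ itself. In applications, \cref{thm:wyy} supplies $d=-1/2$, or $d=-1$ in the non-$\theta$-chain case.
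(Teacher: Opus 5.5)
Your proposal is correct and follows the paper's proof: you check that $H-e$ is connected (and simple) so that \cref{lem:walk} applies, then substitute $\excR{H}{e}=(n(H)+n_2(H))/4+\dfc(H,e)\le (n(H)+n_2(H))/4+d$ into $\OPT(H-e,u,v)\le n(H)+\excR{H}{e}-1$. Your explicit check that $\mathcal{E}(H,e)\neq\emptyset$, via a cycle through $e$, is harmless extra care that the paper leaves implicit.
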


\begin{proof}
Since $H$ is $2$-connected, $H-e$ is connected and simple, so
\cref{lem:walk} gives $\OPT(H-e,u,v)\le n(H)+\excR{H}{e}-1$; now
substitute the definition
$\excR{H}{e}=(n(H)+n_2(H))/4+\dfc(H,e)$ and the bound
$\dfc(H,e)\le d$.
\end{proof}

We first prove the existential statement in
\cref{thm:endpoint-intro}; the $O(n^2)$-time construction is given in
\cref{cor:algo-wyy}.  We distinguish cases by whether $s,t$ are
adjacent and by their degrees.  If $st\in E(G)$ (Case~A), the
inequality $\dfc(G,st)\le-1/2$ applies directly.  If $st\notin E(G)$,
there is no edge with endpoints $s,t$ to which \cref{lem:walk} can be
applied.  We therefore construct an auxiliary rooted instance; the
construction depends on the terminal degrees (Cases~B--D).  The
$\{3,3\}$ construction appears in \cref{sec:case-32}, and the
degree-$2$ cases are handled in \cref{sec:mainproof}.

\subsection{The Split-and-Subdivide Construction}\label{sec:case-32}

Only the $\{3,3\}$ case uses this construction.

\paragraph*{Construction of $J^{33}$ and $K^{33}$
($\deg_G(s)=\deg_G(t)=3$).}
Write $N_G(s)=\{a,b,c\}$ and $N_G(t)=\{a',b',c'\}$.  Because $s$ and
$t$ are non-adjacent degree-$3$ vertices, the six incident edges
$sa,sb,sc,ta',tb',tc'$ are distinct, although neighbour labels may
coincide across the two terminals.  Replace $s$ by adjacent vertices
$s_1,s_2$, reattaching $sa,sb$ to $s_1$ and $sc$ to $s_2$.  Define
$t_1,t_2$ analogously, reattaching $ta',tb'$ to $t_1$ and $tc'$ to
$t_2$.  Add the root edge $e^\star:=s_2t_2$ to obtain $J^{33}$.
Finally, subdivide $s_1s_2$ at a new vertex $p$ and $t_1t_2$ at a new
vertex $q$ to obtain $K^{33}$ (see \cref{fig:split-gadget}).

\begin{figure}[!htbp]
\centering
\begin{tikzpicture}[
  vtx/.style={circle, draw, fill=black, inner sep=1.25pt},
  lbl/.style={font=\scriptsize},
  lab/.style={font=\footnotesize},
  every edge/.style={draw, line width=0.6pt},
  x=0.95cm, y=0.95cm,
]

\node[lab] at (1.0,1.4) {(1) original \(G\), \(st\notin E\)};
\node[vtx, label={[lbl]below:\(s\)}] (g-s) at (0.15,0) {};
\node[vtx, label={[lbl]below:\(t\)}] (g-t) at (1.85,0) {};
\draw (g-s) -- ++(-0.7, 0.55) node[lbl,left] {\(a\)};
\draw (g-s) -- ++(-0.7, 0.00) node[lbl,left] {\(b\)};
\draw (g-s) -- ++(-0.7,-0.55) node[lbl,left] {\(c\)};
\draw (g-t) -- ++( 0.7, 0.55) node[lbl,right] {\(a'\)};
\draw (g-t) -- ++( 0.7, 0.00) node[lbl,right] {\(b'\)};
\draw (g-t) -- ++( 0.7,-0.55) node[lbl,right] {\(c'\)};

\node[lab] at (7.0,1.4) {(2) split graph \(J^{33}\)};
\node[vtx, label={[lbl]above:\(s_1\)}] (j-s1) at (6.0,0.45) {};
\node[vtx, label={[lbl]below:\(s_2\)}] (j-s2) at (6.0,-0.45) {};
\node[vtx, label={[lbl]above:\(t_1\)}] (j-t1) at (8.0,0.45) {};
\node[vtx, label={[lbl]below:\(t_2\)}] (j-t2) at (8.0,-0.45) {};
\draw (j-s1) -- (j-s2);
\draw (j-t1) -- (j-t2);
\draw[very thick] (j-s2) -- (j-t2) node[midway, below=2pt, lbl] {\(e^\star\)};
\draw (j-s1) -- ++(-0.85, 0.40) node[lbl,left] {\(a\)};
\draw (j-s1) -- ++(-0.85, 0.05) node[lbl,left] {\(b\)};
\draw (j-s2) -- ++(-0.85,-0.05) node[lbl,left] {\(c\)};
\draw (j-t1) -- ++( 0.85, 0.40) node[lbl,right] {\(a'\)};
\draw (j-t1) -- ++( 0.85, 0.05) node[lbl,right] {\(b'\)};
\draw (j-t2) -- ++( 0.85,-0.05) node[lbl,right] {\(c'\)};

\node[lab] at (7.0,-2.6) {(3) subdivided graph \(K^{33}\)};
\node[vtx, label={[lbl]above=4pt:\(s_1\)}] (k-s1) at (6.0,-3.40) {};
\node[vtx, label={[lbl]left=2pt:\(p\)}]  (k-p)  at (6.0,-4.35) {};
\node[vtx, label={[lbl]below=4pt:\(s_2\)}] (k-s2) at (6.0,-5.30) {};
\node[vtx, label={[lbl]above=4pt:\(t_1\)}] (k-t1) at (8.0,-3.40) {};
\node[vtx, label={[lbl]right=2pt:\(q\)}] (k-q)  at (8.0,-4.35) {};
\node[vtx, label={[lbl]below=4pt:\(t_2\)}] (k-t2) at (8.0,-5.30) {};
\draw (k-s1) -- (k-p) -- (k-s2);
\draw (k-t1) -- (k-q) -- (k-t2);
\draw[very thick] (k-s2) -- (k-t2) node[midway, below=2pt, lbl] {\(e^\star\)};
\draw[dashed, rounded corners] (5.55,-5.45) rectangle (6.45,-3.30);
\draw[dashed, rounded corners] (7.55,-5.45) rectangle (8.45,-3.30);
\node[lbl] at (5.30,-4.35) {\(B_s\)};
\node[lbl] at (8.70,-4.35) {\(B_t\)};
\draw (k-s1) -- ++(-0.85, 0.45) node[lbl,left] {\(a\)};
\draw (k-s1) -- ++(-0.85, 0.15) node[lbl,left] {\(b\)};
\draw (k-s2) -- ++(-0.85, 0.00) node[lbl,left] {\(c\)};
\draw (k-t1) -- ++( 0.85, 0.45) node[lbl,right] {\(a'\)};
\draw (k-t1) -- ++( 0.85, 0.15) node[lbl,right] {\(b'\)};
\draw (k-t2) -- ++( 0.85, 0.00) node[lbl,right] {\(c'\)};

\node[lab] at (1.0,-2.6) {(4) delete \(e^\star\), contract \(B_s,B_t\)};
\node[vtx, label={[lbl]below:\(s\)}] (c-s) at (0.15,-4.35) {};
\node[vtx, label={[lbl]below:\(t\)}] (c-t) at (1.85,-4.35) {};
\draw (c-s) -- ++(-0.7, 0.55) node[lbl,left] {\(a\)};
\draw (c-s) -- ++(-0.7, 0.00) node[lbl,left] {\(b\)};
\draw (c-s) -- ++(-0.7,-0.55) node[lbl,left] {\(c\)};
\draw (c-t) -- ++( 0.7, 0.55) node[lbl,right] {\(a'\)};
\draw (c-t) -- ++( 0.7, 0.00) node[lbl,right] {\(b'\)};
\draw (c-t) -- ++( 0.7,-0.55) node[lbl,right] {\(c'\)};
\node[lab, align=center] at (1.0,-5.40) {at least four edge-copies become\\
loops and are deleted under the contraction};

\draw[-{Latex[length=2.5mm]}, thick] (3.0,0) -- (4.6,0)
  node[midway,above,lbl] {split, add \(e^\star\)};
\draw[-{Latex[length=2.5mm]}, thick] (7.0,-1.0) -- (7.0,-2.2)
  node[midway,right=2pt,lbl] {subdivide};
\draw[-{Latex[length=2.5mm]}, thick] (4.6,-4.35) -- (3.0,-4.35)
  node[midway,above,lbl] {delete \(e^\star\), contract};
\end{tikzpicture}
\caption{Endpoint conversion when \(st\notin E(G)\).  Splitting \(s\)
and \(t\) gives the auxiliary graph \(J^{33}\); subdividing the two
split edges gives \(K^{33}\); contracting \(B_s,B_t\) in
\(K^{33}-e^\star\) and deleting loops recovers \(G\).  Neighbour labels denote
incidences, and the same vertex may occur on both sides.}
\label{fig:split-gadget}
\end{figure}

The contraction blocks are $B_s:=\{s_1,p,s_2\}$ and
$B_t:=\{t_1,q,t_2\}$.  Edges formerly incident with the same terminal
retain distinct other endpoints because $G$ is simple; edges formerly
incident with different terminals acquire distinct split endpoints.
Thus $K^{33}$ has no loops or parallel edges, even if some vertex is
adjacent to both $s$ and $t$ (\cref{lem:gadget-admissibility}(a)).

The root edge is $e^\star=s_2t_2$.  The subdivision vertices $p,q$
inside $B_s,B_t$ force at least four internal edge-copies in every
connected spanning $\{s_2,t_2\}$-join (\cref{lem:four-loop-loss});
this is the value $\lambda=4$ used in \cref{lem:accounting}.

\begin{lemma}[Splitting and subdivision]\label{lem:split-connectivity}
Let \(H\) be a \(2\)-connected graph, and let \(H'\) be
obtained from \(H\) by replacing a degree-\(3\) vertex \(v\) by two
adjacent vertices \(v_1,v_2\), reattaching two of the three edges at
\(v\) to \(v_1\) and the third to \(v_2\).  Then \(H'\) is
\(2\)-connected.  Likewise, subdividing an edge of a
\(2\)-connected graph yields a \(2\)-connected graph.
\end{lemma}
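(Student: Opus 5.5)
We prove both statements from the definition of $2$-connectivity: a graph with at least three vertices is $2$-connected if it is connected and has no cut vertex. For the splitting statement we will show that $H'$ has at least three vertices, is connected, and that $H'-x$ is connected for every $x\in V(H')$. Write $N_H(v)=\{a,b,c\}$, with $va,vb$ reattached to $v_1$ and $vc$ to $v_2$. The natural map $\pi\colon H'\to H$ contracts $v_1v_2$ back to $v$, and every path of $H$ lifts to a walk in $H'$, passing through $v_1v_2$ where needed. So connectivity of $H'$ follows from connectivity of $H$. Also $|V(H')|=|V(H)|+1\ge4$.

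For a vertex $x\notin\{v_1,v_2\}$, the graph $H-x$ is connected because $H$ is $2$-connected. Moreover $H'-x$ is obtained from $H-x$ by the same split, and $v_1v_2$ is still present. Lifting paths as above shows that $H'-x$ is connected.

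The cases $x=v_1$ and $x=v_2$ need care, since they use the degree hypothesis. $H'-v_1$ is $H-v$ together with the vertex $v_2$, which is attached by the edge $v_2c$. Since $H-v$ is connected and $c\in V(H-v)$, the graph $H'-v_1$ is connected. Likewise, $H'-v_2$ is $H-v$ together with $v_1$ attached to $a$ and $b$, hence connected. Here we only need each of $v_1,v_2$ to keep at least one original neighbour, which holds because $v$ had degree $3$ and the split was $2+1$. This is the only step where the hypothesis enters.

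For subdivision, let $H''$ be obtained from $H$ by subdividing $xy$ with a new vertex $w$. Then $H''$ has at least four vertices and is connected. Deleting $w$ leaves $H-xy$. This graph is connected: since $H$ is $2$-connected with at least three vertices, it is $2$-edge-connected, so no edge is a bridge. Deleting any other vertex $z$ gives $H-z$ with the edge $xy$ (if it survives) replaced by the path $xwy$. Possibly $w$ becomes a pendant vertex when $z\in\{x,y\}$, but it remains attached, so $H''-z$ is connected.

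The main obstacle is only bookkeeping: making sure the removal of a split vertex does not strand its partner. The degree-$3$, $2+1$ hypothesis resolves this. An alternative route is via ear decompositions or Menger's theorem, but the direct cut-vertex check is shortest.
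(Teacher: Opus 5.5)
Your proof is correct and takes essentially the same route as the paper. Both check directly that deleting any single vertex leaves the graph connected: the case $x\notin\{v_1,v_2\}$ is handled by contracting or lifting through $v_1v_2$, the cases $x=v_1$ and $x=v_2$ are treated separately, and the subdivision claim uses the same case split. The only cosmetic difference is at $x=v_2$, where you see $H'-v_2$ as $H-v$ with $v_1$ attached to $a,b$, while the paper sees it as $H-vc$ and uses bridgelessness; both arguments are valid.
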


\begin{proof}
For the first claim it suffices to show that \(H'-x\) is connected
for every \(x\in V(H')\).  Suppose first that \(x\notin\{v_1,v_2\}\).
Contracting the edge \(v_1v_2\) in \(H'-x\) yields \(H-x\).
Contracting an edge preserves the number of connected components, so
\(H'-x\) is connected because \(H-x\) is.  Suppose next that
\(x=v_1\).  Then \(H'-v_1\) is obtained
from \(H-v\) by adding the vertex \(v_2\) and joining it to the one
neighbour of \(v\) whose edge was reattached to \(v_2\); since
\(H-v\) is connected, so is \(H'-v_1\).  Suppose finally that
\(x=v_2\).  Then \(H'-v_2\) is \(H\) with one edge at \(v\) deleted
(the vertex \(v_1\) playing the role of \(v\)); since \(H\) is
\(2\)-connected, it has no bridge, so \(H'-v_2\) is connected.

For the second claim, let \(H''\) be obtained from \(H\) by
subdividing an edge \(xy\) at a new vertex \(r\).  Deleting \(r\)
from \(H''\) leaves \(H-xy\), which is connected because \(H\) has no
bridge.  Now delete a vertex \(w\in V(H)\).  If \(w\notin\{x,y\}\),
then \(H''-w\) is \(H-w\) with the edge \(xy\) subdivided; if
\(w\in\{x,y\}\), then \(H''-w\) is \(H-w\) with \(r\) added and
joined to the surviving endpoint of \(xy\).  In both cases \(H''-w\)
is connected because \(H-w\) is.
\end{proof}

\begin{lemma}[Admissibility of $K^{33}$]\label{lem:gadget-admissibility}
Let $G$ be a simple $2$-connected subcubic graph, let
$s,t\in V(G)$ be non-adjacent with
$\deg_G(s)=\deg_G(t)=3$, and let $K=K^{33}$ be the subdivided graph of
\cref{sec:case-32}.  Then:
\begin{enumerate}[leftmargin=2em, label=\textup{(\alph*)}]
  \item $K$ is simple subcubic with $|V(K)|=n+4$ and
    $n_2(K)=n_2(G)+2$;
  \item $K-e^\star$ is $2$-connected;
  \item if additionally $G-\{s,t\}$ is connected, then $(K,e^\star)$
    is not a rooted $\theta$-chain.
\end{enumerate}
\end{lemma}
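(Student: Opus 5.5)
For part (a) I will simply count vertices, degrees and edges. Passing from $G$ to $K$ deletes $s,t$ and adds $s_1,s_2,t_1,t_2,p,q$, so $|V(K)|=n-2+6=n+4$. The degrees in $K$ are as follows:
\begin{itemize}[leftmargin=2em]
  \item $s_1$ has $a,b,p$ and $t_1$ has $a',b',q$, so both have degree~$3$;
  \item $s_2$ has $c,p,t_2$ and $t_2$ has $c',q,s_2$, so both have degree~$3$;
  \item $p,q$ have degree~$2$;
  \item every other vertex keeps its $G$-degree, because each edge $sx$ or $tx$ is merely reattached.
\end{itemize}
Since $s,t$ had degree $3$, this gives $n_2(K)=n_2(G)+2$ and maximum degree $3$.

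Simplicity of $K$ is the observation already made before the lemma. Reattached edges at the same former terminal go to distinct neighbours because $G$ is simple. Edges from different terminals end at distinct split vertices. The edges $s_1p$, $ps_2$, $t_1q$, $qt_2$ and $e^\star$ join new vertices only, and $e^\star$ is not parallel to anything because $st\notin E(G)$ means no original edge joined $s$ to $t$.

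For part (b), I observe that $K-e^\star$ is exactly what one gets from $G$ by applying \cref{lem:split-connectivity} twice and then subdividing twice. First split $s$; this gives a $2$-connected graph in which $t$ still has degree $3$. Then split $t$ in that graph. Finally subdivide $s_1s_2$ and $t_1t_2$. Each step preserves $2$-connectivity by \cref{lem:split-connectivity}, and the root edge is never involved. This part is routine.

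Part (c) is the step I expect to need care. Suppose $(K,e^\star)$ were a rooted $\theta$-chain. Then $K-e^\star$ would be the union of two internally disjoint subcubic chains from $s_2$ to $t_2$. Since $K-e^\star$ is $2$-connected by (b), it has no cut-edges. Hence each chain must consist of a single edge, or else of a single chain-edge path with no nontrivial structure. The point is that any cut-edge of a chain is a cut-edge of the union unless the other chain bridges around it.

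The cleanest route is to look at the degrees: $s_2$ and $t_2$ each have degree exactly $2$ in $K-e^\star$. So the two chain edges at $s_2$ are $s_2p$ and $s_2c$, and likewise at $t_2$. The two chains are then a chain $\mathcal C^+$ starting $s_2p s_1\cdots$ and a chain $\mathcal C^-$ starting $s_2c\cdots$. In a chain, the chain edge at the start is a cut-edge of that chain, and the chains meet only at $s_2,t_2$. Therefore the vertex set of $K-e^\star$ minus $\{s_2,t_2\}$ splits into the two interiors $V(\mathcal C^+)\setminus\{s_2,t_2\}$ and $V(\mathcal C^-)\setminus\{s_2,t_2\}$, with no edges between them.

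Now contract back to $G$. The interior of $\mathcal C^+$ contains $p$ and $s_1$, hence $a$ and $b$. The interior of $\mathcal C^-$ contains $c$. Both interiors also contain one of $q,t_1$ or $c'$, since each chain must reach $t_2$ through one of $t_2$'s two remaining edges. Removing $p,q,s_1,t_1$ from these interiors leaves a partition of $V(G)\setminus\{s,t\}$ into two nonempty parts, one containing $c$ and the other containing $a,b$, with no $G$-edges between them. Each part is nonempty because the part containing $c$ contains $c\notin\{s,t\}$, and the other contains $a$. This contradicts the connectivity of $G-\{s,t\}$.

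The one subtlety to check is that a vertex adjacent to both $s$ and $t$, such as $c=c'$, does not break the partition argument. That case only forces both such incidences into the same interior, which is consistent with the argument, and the contradiction does not depend on it.
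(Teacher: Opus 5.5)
Your proof is correct and follows the paper's approach. Parts (a) and (b) match it. In (c) you use the same key fact: since $K-e^\star$ is the edge-disjoint union of two internally disjoint chains, no edge joins their interiors. The only difference is direction. You push this separation down to $G-\{s,t\}$, placing $a,b$ and $c$ on opposite sides via the two edges $s_2p$, $s_2c$ at $s_2$. The paper instead lifts connectivity up, showing $K-\{s_2,t_2\}$ is connected because $p,q$ are pendant at $s_1,t_1$. The paper's form has the advantage that \cref{lem:component-aware} can reuse it unchanged when $G-\{s,t\}$ is disconnected; your version depends on $G-\{s,t\}$ being connected.

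Do delete the first paragraph of your (c). The inference that $2$-connectivity of $K-e^\star$ forces each chain to be a single edge or a structureless path is false. Every cut-edge of one chain is bypassed by the other chain, so two chains with nontrivial blocks can have a $2$-connected union. Your actual argument never uses this claim.
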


\begin{proof}
For part~(a), the preceding simplicity argument shows that $K$ has no
loops or parallel edges.  Replacing $\{s,t\}$ by
$\{s_1,s_2,t_1,t_2,p,q\}$ adds four vertices.  The split vertices have
degree~$3$, and $p,q$ are the only new degree-$2$ vertices.  Hence
$|V(K)|=n+4$ and $n_2(K)=n_2(G)+2$.

For part~(b), $K-e^\star$ is obtained from $G$ by splitting $s$ and $t$
and then subdividing the two split edges, so it is
$2$-connected by repeated application of
\cref{lem:split-connectivity}.

For part~(c), suppose that $G-\{s,t\}$ is connected.  After deleting
$s_2,t_2$, the vertices $p,q$ are leaves adjacent to $s_1,t_1$,
respectively, and each of $s_1,t_1$ has neighbours in the connected
graph $G-\{s,t\}$.  Thus $K-\{s_2,t_2\}$ is connected.  Suppose, for
contradiction, that $(K,e^\star)$ is a rooted $\theta$-chain.  Then
$K-e^\star$ is the edge-disjoint union of two internally
vertex-disjoint subcubic $s_2$-$t_2$ chains.  Neither chain can be a
    single edge, since $e^\star$ is the only edge between $s_2$ and $t_2$;
    hence
both chains have nonempty interiors.  No edge joins the two interiors,
so $K-\{s_2,t_2\}$ would be disconnected, a contradiction.
\end{proof}

\begin{lemma}[Contraction preserves the join]\label{lem:contraction}
Let $M$ be a connected spanning $\{u, v\}$-join multigraph, and let
$B_u, B_v \subseteq V(M)$ be disjoint vertex sets with $u \in B_u$,
$v \in B_v$, such that every vertex of
$(B_u \cup B_v) \setminus \{u, v\}$ has even $M$-degree.
Contracting $B_u$ to a single vertex $s$ and $B_v$ to a single
vertex $t$, and deleting the resulting loops, yields a connected
spanning $\{s, t\}$-join multigraph $M'$ on
$(V(M) \setminus (B_u \cup B_v)) \cup \{s, t\}$ with $|E(M')|$ equal
to $|E(M)|$ minus the number of internal edge-copies (edges with
both endpoints in $B_u$ or both in $B_v$) in $M$.
\end{lemma}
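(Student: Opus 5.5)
We verify the three claims of \cref{lem:contraction} directly: the edge count, the parity condition, and connectivity with spanning. Let $M'$ be the multigraph obtained from $M$ by identifying all vertices of $B_u$ into $s$ and all vertices of $B_v$ into $t$, keeping every edge-copy with its endpoints relabelled, and then deleting the loops. The loops created are exactly the internal edge-copies, that is, those with both endpoints in $B_u$ or both in $B_v$. An edge-copy between $B_u$ and $B_v$ becomes an $s$-$t$ edge and is not a loop. Hence $|E(M')|=|E(M)|-(\#\text{internal edge-copies})$. Each edge of $M'$ is the image of an edge of $G$, so it is a copy of an edge of the contracted graph, which is what the statement requires.

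\emph{Parity.} A vertex $w\notin B_u\cup B_v$ keeps its degree, because none of its incident edge-copies becomes a loop. It is odd only if $w\in\{u,v\}$, which is impossible since $u\in B_u$ and $v\in B_v$; so $w$ is even. For the new vertex $s$, note that every internal edge-copy inside $B_u$ contributes $2$ to $\sum_{x\in B_u}\deg_M(x)$. Therefore
\[
\deg_{M'}(s)=\sum_{x\in B_u}\deg_M(x)-2\,(\#\text{internal copies in }B_u).
\]
By hypothesis every vertex of $B_u\setminus\{u\}$ has even $M$-degree, while $\deg_M(u)$ is odd. So $\deg_{M'}(s)$ is odd, and symmetrically $\deg_{M'}(t)$ is odd. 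The odd-degree set of $M'$ is therefore exactly $\{s,t\}$. Here $s\ne t$ because $B_u$ and $B_v$ are disjoint.

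\emph{Connectivity and spanning.} Contracting vertex sets maps a connected graph onto a connected graph, since every path of $M$ maps to a walk of $M'$. Deleting loops does not change connectivity. The vertex set of $M'$ is by construction $(V(M)\setminus(B_u\cup B_v))\cup\{s,t\}$, and every such vertex is the image of a vertex of the spanning multigraph $M$, so $M'$ is spanning.

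The only delicate point is the bookkeeping in the parity step: the degree of $s$ must be computed after the loops are removed, and each removed loop lowers that degree by $2$, so parity is unaffected.
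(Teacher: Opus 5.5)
Your proof is correct and follows the paper's argument essentially line for line: the deleted loops are exactly the internal edge-copies, $\deg_{M'}(s)$ equals the $B_u$-degree sum minus twice the number of internal copies and so has the parity of $\deg_M(u)$, and contraction followed by loop deletion preserves connectivity. The only blemish is your aside about edges being images of edges of $G$, which names a host graph that does not appear in the lemma; its intended point (that edges of $M'$ are copies of edges of the contracted host graph) is harmless and not needed for the stated conclusions.
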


\begin{proof}
Each edge of $M$ with both endpoints in $B_u$ or both in $B_v$
becomes a loop after contraction and is deleted, which gives the
stated edge count.  Internal edges contribute twice to the sum of
degrees in $B_u$, and every vertex in $B_u\setminus\{u\}$ has even
$M$-degree.  Therefore
\[
  \deg_{M'}(s)
  \equiv \sum_{w\in B_u}\deg_M(w)
  \equiv \deg_M(u)
  \equiv 1 \pmod 2.
\]
The same argument applies to $t$, and all other vertices retain even
degree.  Contracting vertex sets of a connected multigraph and
deleting loops preserves connectivity.
\end{proof}

Each terminal conversion below produces a rooted instance
$(H,e^\star)$ and a pair of contraction blocks; a single accounting
lemma then gives the walk length.

\begin{lemma}[Reduction accounting]\label{lem:accounting}
Suppose a reduction replaces $(G,s,t)$ by a rooted instance $(H,e^\star)$,
$e^\star=uv$, meeting the hypotheses of \cref{lem:rooted-walk}, together
with disjoint contraction blocks $B_u\ni u$, $B_v\ni v$ as in
\cref{lem:contraction}.  Assume that the loopless quotient of
$H-e^\star$ obtained by contracting the blocks is a spanning
subgraph of $G$, with the contracted vertices identified with
$s,t$.  Assume also that every connected spanning $\{u,v\}$-join of
$H-e^\star$ contains at least $\lambda$ edge-copies internal to $B_u$
or $B_v$.  Put
\[
  \alpha:=n(H)-n,\qquad \beta:=n_2(H)-n_2(G),\qquad
  \rho:=\tfrac{5\alpha+\beta}{4}+d-\lambda ,
\]
where $\dfc(H,e^\star)\le d$.
Then $G$ has a spanning $s$-$t$ walk of length at most
$\tfrac{5n+n_2(G)}{4}+\rho-1$, hence
\[
  \OPT(G,s,t)\ \le\ \Bigl\lfloor\tfrac{5n+n_2(G)}{4}+\rho\Bigr\rfloor-1 .
\]
In particular $\OPT(G,s,t)\le\lfloor(5n+n_2(G))/4\rfloor-1$ whenever
$\rho\le 0$.
\end{lemma}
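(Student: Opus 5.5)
We will chain \cref{lem:rooted-walk}, \cref{lem:contraction}, and the integrality of walk lengths.

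\emph{Step 1: a short walk in the auxiliary instance.} Since $(H,e^\star)$ satisfies the hypotheses of \cref{lem:rooted-walk} and $\dfc(H,e^\star)\le d$, there is a spanning $u$-$v$ walk in $H-e^\star$ of length at most $(5n(H)+n_2(H))/4+d-1$. Its edge multiset $M$ is a connected spanning $\{u,v\}$-join of $H-e^\star$. Every vertex other than $u,v$ has even $M$-degree, because $M$ comes from an Euler trail between $u$ and $v$. In particular, every vertex of $(B_u\cup B_v)\setminus\{u,v\}$ has even degree, so the hypotheses of \cref{lem:contraction} hold.

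\emph{Step 2: contract back to $G$.} Contract $B_u$ to $s$ and $B_v$ to $t$, and delete loops. By \cref{lem:contraction}, this gives a connected spanning $\{s,t\}$-join $M'$ on the quotient vertex set. We have $|E(M')|=|E(M)|-(\text{number of internal copies})$. By the hypothesis on $\lambda$, this is at most $|E(M)|-\lambda$.

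The quotient of $H-e^\star$ is a spanning subgraph of $G$ under the identification. Hence every edge-copy of $M'$ is a copy of an edge of $G$. So $M'$ is a connected spanning $\{s,t\}$-join of $G$, and its Euler trail is a spanning $s$-$t$ walk in $G$.

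The one point needing care is that $M'$ spans $V(G)$. The vertex set of $M'$ is $(V(H)\setminus(B_u\cup B_v))\cup\{s,t\}$, which is exactly the quotient's vertex set. By assumption this is all of $V(G)$.

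\emph{Step 3: the arithmetic.} Substitute $n(H)=n+\alpha$ and $n_2(H)=n_2(G)+\beta$:
\[
|E(M')|\le \tfrac{5n+n_2(G)}{4}+\tfrac{5\alpha+\beta}{4}+d-1-\lambda=\tfrac{5n+n_2(G)}{4}+\rho-1 .
\]
$\OPT(G,s,t)$ is an integer, and $\lfloor x-1\rfloor=\lfloor x\rfloor-1$. Taking floors therefore gives $\OPT(G,s,t)\le\lfloor(5n+n_2(G))/4+\rho\rfloor-1$. Monotonicity of the floor gives the final claim when $\rho\le0$.

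No step is a real obstacle. The only delicate point is checking that the contracted join lives in $G$ and spans it, which the quotient hypothesis supplies directly.
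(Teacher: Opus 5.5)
Your proposal is correct and matches the paper's proof step for step: apply \cref{lem:rooted-walk} in $H-e^\star$, contract the blocks via \cref{lem:contraction} to lose at least $\lambda$ edge-copies, then finish with integrality of $\OPT$ and $\lfloor x-1\rfloor=\lfloor x\rfloor-1$. You also spell out two points the paper leaves implicit: that the walk's edge multiset meets the parity hypothesis of \cref{lem:contraction}, and that the contracted join consists of copies of edges of $G$ and spans $V(G)$.
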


\begin{proof}
\Cref{lem:rooted-walk} gives a spanning $u$-$v$ walk of $H-e^\star$ of
length at most $\tfrac{5n(H)+n_2(H)}{4}+d-1
=\tfrac{5n+n_2(G)}{4}+\tfrac{5\alpha+\beta}{4}+d-1$.  Contracting
$B_u\mapsto s$, $B_v\mapsto t$ and deleting loops
(\cref{lem:contraction}) removes at least $\lambda$ edges and leaves a
connected spanning $\{s,t\}$-join of $G$ of length at most
$\tfrac{5n+n_2(G)}{4}+\rho-1$, realised as an $s$-$t$ walk by an Euler
trail.  Because $\OPT$ is integral and
$\lfloor x-1\rfloor=\lfloor x\rfloor-1$, this real-valued bound implies
the displayed inequality.  If $\rho\le0$, then
$\lfloor x+\rho\rfloor\le\lfloor x\rfloor$ for
$x=(5n+n_2(G))/4$, which proves the final assertion.
\end{proof}

\begin{lemma}[Contraction loss: at least four edge-copies]\label{lem:four-loop-loss}
Let $W$ be a connected spanning $\{s_2,t_2\}$-join multigraph of
$K-e^\star$ whose edges are copies of edges of $K-e^\star$.  After
contracting $B_s=\{s_1,p,s_2\}$ to $s$ and $B_t=\{t_1,q,t_2\}$ to $t$
and deleting loops, the resulting connected spanning
$\{s,t\}$-join multigraph in $G$ has at most $|E(W)|-4$ edges.
\end{lemma}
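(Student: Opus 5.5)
The plan is to count the edge-copies of $W$ at the two subdivision vertices $p$ and $q$, and then invoke \cref{lem:contraction}. First I will check that the hypotheses of \cref{lem:contraction} hold with $M:=W$, $u:=s_2$, $v:=t_2$, $B_u:=B_s$ and $B_v:=B_t$.

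The blocks $B_s$ and $B_t$ are disjoint, since all six vertices $s_1,p,s_2,t_1,q,t_2$ are distinct. The vertices of $(B_s\cup B_t)\setminus\{s_2,t_2\}$ are $s_1,p,t_1,q$. None of them is a terminal of the $\{s_2,t_2\}$-join $W$, so each has even $W$-degree. \Cref{lem:contraction} then yields a connected spanning $\{s,t\}$-join $M'$ with $|E(M')|=|E(W)|-\iota$, where $\iota$ is the number of edge-copies of $W$ with both endpoints in $B_s$ or both in $B_t$.

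I will also confirm that $M'$ is a multigraph on copies of edges of $G$. The edges of $K-e^\star$ that are not internal to a block are exactly the original edges of $G$ with $s$ or $t$ replaced by a split vertex. Contracting $B_s\mapsto s$ and $B_t\mapsto t$ returns each such edge to its original edge of $G$, as recorded in \cref{fig:split-gadget}.

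It remains to show $\iota\ge4$. In $K-e^\star$ the vertex $p$ has degree~$2$, and its only neighbours are $s_1$ and $s_2$. Hence every edge of $K-e^\star$ incident with $p$ has both endpoints in $B_s$. Since $W$ is spanning and connected on more than one vertex, $p$ has positive $W$-degree. Since $p\notin\{s_2,t_2\}$, that degree is even, so $\deg_W(p)\ge2$. As $W$ has no loops, these are at least two distinct edge-copies, each internal to $B_s$.

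The same argument applied to $q$ gives at least two edge-copies internal to $B_t$. These copies are distinct from those at $p$ because the blocks are disjoint. Therefore $\iota\ge4$ and $|E(M')|\le|E(W)|-4$.

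There is no real obstacle in this argument. The only point needing care is that all copies at $p$ are internal, which relies on $p$ being a subdivision vertex whose two neighbours both lie in $B_s$, and on $e^\star$ not being incident with $p$ or $q$.
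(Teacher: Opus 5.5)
Your proposal is correct and matches the paper's proof: $p$ and $q$ each have positive even $W$-degree, so each carries at least two edge-copies, all internal to $B_s$ or $B_t$ respectively, and contraction therefore deletes at least four copies. Your explicit check of the hypotheses of \cref{lem:contraction}, and of the fact that the quotient uses only copies of edges of $G$, is a harmless addition that the paper leaves implicit.
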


\begin{proof}
Since the odd-degree set of $W$ is $\{s_2,t_2\}$, both $p$ and $q$
have even $W$-degree.  Because $W$ is connected and spanning, both
degrees are positive and hence at least~$2$.  Every edge-copy incident with
$p$ is internal to $B_s$, and every edge-copy incident with $q$ is
internal to $B_t$.  Thus contraction deletes at least four
edge-copies.
\end{proof}

\begin{lemma}[Parity of $n(G)+n_2(G)$]\label{lem:parity-even}
For every $2$-connected subcubic graph $G$ with $n(G)\ge2$,
the integer $n(G)+n_2(G)$ is even.
\end{lemma}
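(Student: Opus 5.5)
We prove the statement by double counting degrees. A $2$-connected graph on $n(G)\ge 3$ vertices has minimum degree at least~$2$, and $G$ is subcubic, so every vertex has degree $2$ or $3$. Writing $n_3(G)$ for the number of degree-$3$ vertices, we have $n(G)=n_2(G)+n_3(G)$.

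The handshake lemma gives
\[
  2|E(G)| = 2n_2(G)+3n_3(G).
\]
Hence $3n_3(G)$ is even, so $n_3(G)$ is even. It follows that
\[
  n(G)+n_2(G)=2n_2(G)+n_3(G)
\]
is even.

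It remains to treat small $n(G)$, which is the only step needing care. In the WYY convention quoted before \cref{thm:wyy}, the single-vertex loop is admitted as a base instance. Under the hypothesis $n(G)\ge2$, however, we must check the two-vertex case. If $G$ is simple, $2$-connectivity with two vertices allows only $K_2$. Its two vertices have degree~$1$, so $n_2(K_2)=0$ and $n(K_2)+n_2(K_2)=2$, which is even.

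Should multigraphs (with a parallel root edge) be intended, a two-vertex graph with a double or triple edge has both vertices of equal degree $2$ or $3$. This gives $n+n_2\in\{4,2\}$, again even.

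For $n(G)\ge3$, $2$-connectivity gives minimum degree at least $2$: a vertex of degree $1$ would make its neighbour a cut vertex, and a vertex of degree $0$ would disconnect the graph. So the counting argument above applies.

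I expect no real obstacle. The only point requiring attention is to note that degree-$1$ vertices are excluded once $n(G)\ge3$, and to handle the degenerate two-vertex case separately.
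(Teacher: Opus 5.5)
Your proof is correct and is essentially the paper's own argument. For $n(G)\ge3$, minimum degree $2$ together with the handshake lemma forces $n_3(G)$ to be even, so $n(G)+n_2(G)=2n_2(G)+n_3(G)$ is even; you also check the case $n(G)=2$ explicitly, which the paper dismisses as immediate. Your multigraph aside is harmless but unnecessary, since the paper takes all graphs to be simple unless stated otherwise.
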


\begin{proof}
If $n(G)=2$, the claim is immediate.  Assume henceforth that
$n(G)\ge3$.  A $2$-connected graph then has minimum degree at least~$2$.  Since $G$ is
subcubic, every vertex has degree $2$ or $3$; let $n_3(G)$ denote the
number of degree-$3$ vertices.  The handshaking lemma shows that
$2n_2(G)+3n_3(G)$ is even, so $n_3(G)$ is even.  Therefore
$n(G)+n_2(G)=n_3(G)+2n_2(G)$ is even.
\end{proof}

The construction still allows us to choose which two edges incident
with each terminal are assigned to its first split vertex.  We make
this choice using the components of $G-\{s,t\}$.  The resulting
\emph{component-aware split} works even when $G-\{s,t\}$ is
disconnected.

\begin{lemma}[Component-aware split handles the separating case]\label{lem:component-aware}
Let $G$ be a simple $2$-connected subcubic graph and let
$s,t\in V(G)$ be non-adjacent with $\deg_G(s)=\deg_G(t)=3$.  In the
$\{3,3\}$ construction of \cref{sec:case-32}, one can assign two
incidences to $s_1$ and one to $s_2$, and likewise at $t$, so that
$K^{33}-\{s_2,t_2\}$ is connected.  Consequently
$(K^{33},e^\star)$ is not a rooted $\theta$-chain, and
$\dfc(K^{33},e^\star)\le -1$.
\end{lemma}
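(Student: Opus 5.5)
The plan is to reduce connectivity of $K^{33}-\{s_2,t_2\}$ to a small covering condition on the components of $G-\{s,t\}$, and then pick the incidences case by case. In $K^{33}-\{s_2,t_2\}$ the vertices $p,q$ are leaves at $s_1,t_1$. The edges $sc$ and $tc'$ reattached to $s_2,t_2$ disappear. What remains is $G-\{s,t\}$ together with $s_1$ joined by two edges and $t_1$ joined by two edges into it. Let $C_1,\dots,C_k$ be the components of $G-\{s,t\}$, and let $S_1$ and $T_1$ be the sets of components met by the two edges at $s_1$ and at $t_1$. The graph is connected if and only if $S_1\cup T_1$ contains every $C_j$ and $S_1\cap T_1\neq\emptyset$; in that case $s_1$ and $t_1$ share a component and every component hangs off one of them.

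\textbf{Step 1: every component sees both terminals.} Since $G$ is $2$-connected and $s,t$ are non-adjacent, each $C_j$ has a neighbour of $s$ and a neighbour of $t$. For instance, if $C_j$ had no neighbour of $t$, then $s$ would separate $C_j$ from $t$. As $\deg_G(s)=3$, this gives $k\le 3$.

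\textbf{Step 2: choose the split, by cases on $k$.}
\begin{itemize}
\item If $k=1$, any assignment works. Indeed $S_1=T_1=\{C_1\}$, which is the situation of \cref{lem:gadget-admissibility}(c).
\item If $k=2$, the three edges at $s$ split $2+1$ over $C_1,C_2$. Attach to $s_2$ one of the two edges going into the component receiving two, so that $S_1=\{C_1,C_2\}$. Do the same at $t$, giving $T_1=\{C_1,C_2\}$. Both conditions then hold.
\item If $k=3$, each $C_j$ receives exactly one edge from $s$ and one from $t$. Attach the $s$-edge into $C_3$ to $s_2$ and the $t$-edge into $C_1$ to $t_2$. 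Then $S_1=\{C_1,C_2\}$ and $T_1=\{C_2,C_3\}$, which cover all three components and share $C_2$.
\end{itemize}

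\textbf{Step 3: consequences.} The simplicity and counting in \cref{lem:gadget-admissibility}(a) and the $2$-connectivity in (b) hold for every choice of which incidence goes to $s_2$ and $t_2$, since \cref{lem:split-connectivity} allows any reattachment. So $K^{33}$ is simple and subcubic, $K^{33}-e^\star$ is $2$-connected, and $K^{33}$ is $2$-connected. The argument in the proof of \cref{lem:gadget-admissibility}(c) uses only the connectivity of $K^{33}-\{s_2,t_2\}$. If $(K^{33},e^\star)$ were a rooted $\theta$-chain, both chains would have nonempty interiors, because $e^\star$ is the only $s_2t_2$ edge. No edge joins the two interiors, so deleting $s_2,t_2$ would disconnect the graph, a contradiction. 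Finally, $K^{33}$ is not a single-vertex loop. Hence \cref{thm:wyy} gives $\dfc(K^{33},e^\star)\le-1$.

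The main obstacle is Step 2. One must check that the covering and intersection conditions can be met simultaneously in every distribution of the six terminal edges. Step 1's bound $k\le3$, with each component hit from both sides, reduces this to the three cases above.
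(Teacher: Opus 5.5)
Your proof is correct and takes essentially the same route as the paper. Both show that every component of $G-\{s,t\}$ meets both terminals, so $k\le3$, and both then run the same case analysis on $k$, followed by the argument of \cref{lem:gadget-admissibility}(c) and then \cref{thm:wyy}. Your $k=3$ assignment is the paper's up to relabelling; at $k=2$ the paper only makes $s_1$ span both components and lets the $t$-side be arbitrary, which your covering-plus-intersection criterion also allows.
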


\begin{proof}
Since $G$ is $2$-connected and $s,t$ are non-adjacent, every
component of $G-\{s,t\}$ has at least one edge to $s$ and at least one
edge to $t$; otherwise $s$ or $t$ would be a cut vertex.  As
$\deg_G(s)=3$, the number $k$ of components is at most $3$; write
$C_1,\dots,C_k$ for them.  The graph $K^{33}-\{s_2,t_2\}$ consists of
$G-\{s,t\}$, the vertex $s_1$ joined to its two assigned neighbours,
the vertex $t_1$ joined to its two assigned neighbours, and the
subdivision vertices $p,q$ pendant at $s_1,t_1$.  It therefore
suffices to choose the assignments so that $C_1,\dots,C_k$
together with $s_1,t_1$ form a connected graph.

If $k=1$, any assignment works.  If $k=2$, both $C_1$ and $C_2$ have
an edge to $s$.  Assign to $s_1$ one such edge from each component, so
that $s_1$ connects $C_1$ and $C_2$; choose any valid assignment at
$t$.  If $k=3$, each $C_i$ has exactly one edge to $s$ and one edge to
$t$.  Assign to $s_1$ the $s$-edges from $C_1,C_2$, and assign to
$t_1$ the $t$-edges from $C_1,C_3$.  Then $s_1$ connects $C_1,C_2$
and $t_1$ connects $C_1,C_3$.

In every case $K^{33}-\{s_2,t_2\}$ is connected, and the argument of
\cref{lem:gadget-admissibility}(c) --- which uses only this
connectedness --- shows that $(K^{33},e^\star)$ is not a rooted
$\theta$-chain, so \cref{thm:wyy}, applicable by
\cref{lem:gadget-admissibility}(a,b), gives $\dfc(K^{33},e^\star)\le-1$.
\end{proof}

\subsection{Proof of the Main Bound}\label{sec:mainproof}

We now assemble the endpoint reductions.  Each case supplies a rooted
instance and contraction blocks for \cref{lem:accounting}.

\begin{proof}[Proof of \cref{thm:endpoint-intro}]
We distinguish four cases according to adjacency and terminal degrees.

\smallskip
\noindent\emph{Adjacent terminals (Case~A).}
Take $(H,e^\star)=(G,st)$ with singleton blocks, so
$\alpha=\beta=\lambda=0$.  By the universal inequality in
\cref{thm:wyy}, we may take $d=-1/2$, and hence $\rho=-1/2$.

\smallskip
\noindent\emph{Non-adjacent degree-$3$ terminals (Case~B).}
Build $K^{33}$ with the component-aware split.  By
\cref{lem:component-aware}, $(K^{33},e^\star)$ is not a rooted
$\theta$-chain, so \cref{thm:wyy} allows $d=-1$.  The parameters are
$(\alpha,\beta,\lambda)=(4,2,4)$
(\cref{lem:gadget-admissibility}(a) and \cref{lem:four-loop-loss}),
hence $\rho=1/2$, and \cref{lem:accounting} gives
$\OPT(G,s,t)\le\lfloor x+\tfrac12\rfloor-1$ for $x:=(5n+n_2(G))/4$,
which is $\lfloor x\rfloor-1$ if $x$ is an integer and
$\lfloor x\rfloor$ if $x$ is a half-integer (\cref{lem:parity-even}).

\smallskip
\noindent\emph{Non-adjacent mixed degrees (Case~C).}
Relabel the terminals so that $\deg_G(s)=3$ and $\deg_G(t)=2$.  Write
$N_G(s)=\{a,b,c\}$.  Replace $s$ by adjacent vertices $s_1,s_2$,
reattaching $sa,sb$ to $s_1$ and $sc$ to $s_2$.  Add the root edge
$e^\star:=s_2t$ and subdivide $s_1s_2$ at a new vertex $p$.  Denote
the resulting graph by $K'$, and set $B_s:=\{s_1,p,s_2\}$ and
$B_t:=\{t\}$.  Then
$\deg_{K'}(s_1)=\deg_{K'}(s_2)=3$, $\deg_{K'}(p)=2$, and the degree
of $t$ increases from~$2$ to~$3$; all other degrees are unchanged.
The change at $t$ removes one degree-$2$ vertex, while $p$ adds one.
Hence $K'$ is simple subcubic with $n(K')=n+2$ and
$n_2(K')=n_2(G)$.  Applying \cref{lem:split-connectivity} first to the
split and then to the subdivision shows that $K'-e^\star$ is
$2$-connected and simple.
Thus $(K',e^\star)$ satisfies \cref{thm:wyy}, and we may take
$d=-1/2$.  In every connected spanning $\{s_2,t\}$-join of
$K'-e^\star$, the vertex $p$ has positive even degree.  Since every
edge incident with $p$ is internal to $B_s$, contraction deletes at
least two edge-copies.  Thus
$(\alpha,\beta,\lambda)=(2,0,2)$ and
$\rho=\tfrac52-\tfrac12-2=0$.  This case uses only the universal
deficit bound and therefore does not need the equality case of
\cref{thm:wyy} to be excluded.  In particular, it also applies when
$G-\{s,t\}$ is disconnected.

\smallskip
\noindent\emph{Non-adjacent degree-$2$ terminals (Case~D).}
Put $G':=G+st$ and $e^\star:=st$, with singleton blocks.  Then
$G'-e^\star=G$, and $G'$ is simple, subcubic, and
$2$-connected, so \cref{thm:wyy} applies and we may take $d=-1/2$.
Since $n(G')=n$ and $n_2(G')=n_2(G)-2$, we get
$(\alpha,\beta,\lambda)=(0,-2,0)$ and $\rho=-1$.

\smallskip
In Cases~A, C, and~D we have $\rho\le 0$, so \cref{lem:accounting}
yields $\OPT(G,s,t)\le\lfloor(5n+n_2(G))/4\rfloor-1$, sharpening to
$\lfloor(5n+n_2(G))/4\rfloor-2$ in Case~D, where $\rho=-1$; Case~B was
settled above.  The $O(n^2)$ construction is \cref{cor:algo-wyy}.
\end{proof}

\subsection{Algorithmic Form}\label{sec:algorithmic-form}

\begin{corollary}[Algorithmic form]\label{cor:algo-wyy}
Given a simple $2$-connected subcubic graph $G$ on $n\ge3$ vertices
and distinct terminals $s,t\in V(G)$, \cref{alg:pathtsp} outputs, in
$O(n^2)$ time, a spanning $s$-$t$ walk whose length satisfies the
bounds of \cref{thm:endpoint-intro}.
\end{corollary}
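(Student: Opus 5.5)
The plan is to turn the case analysis in the proof of \cref{thm:endpoint-intro} into an algorithm, \cref{alg:pathtsp}, and to bound the cost of each stage.

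\emph{Case selection and rooted instance.} First test whether $st\in E(G)$ and read off $\deg_G(s)$ and $\deg_G(t)$. This selects one of Cases~A--D and builds the rooted instance $(H,e^\star)$ together with the blocks $B_u,B_v$. In Case~A this is $(G,st)$; in Case~D it is $(G+st,st)$; in Case~C it is $K'$; in Case~B it is $K^{33}$. Every construction modifies only $O(1)$ vertices and edges, so it takes $O(n)$ time. The one exception is Case~B, which needs the component-aware split of \cref{lem:component-aware}: there we compute the components of $G-\{s,t\}$ by BFS in $O(n)$ time and then assign incidences to $s_1$ and $t_1$ exactly as in that proof ($k\le3$ components). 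In all cases $n(H)\le n+4$, so $n(H)=O(n)$.

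\emph{Even cover and walk.} Next run \textsc{Scan} and the companion routine of \cref{lem:wyy-constructive} on $(H,e^\star)$. This produces, in $O(n(H)^2)=O(n^2)$ time, an even cover $F\in\mathcal{E}(H,e^\star)$ with $\exc(F)\le (n(H)+n_2(H))/4+\dfce(H,e^\star)+2$.

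The delicate point is that the proof of the theorem used the existential deficit $\dfc$, whereas the algorithm only controls $\dfce$. So we must check that $\dfce$ satisfies the same bound $d$ used in each case. Cases~A, C and~D only need $d=-1/2$, and this follows from $\dfce\le-1/2$. Case~B needs $d=-1$. For that we invoke the last sentence of \cref{lem:wyy-constructive}, which requires two facts about $K^{33}$. First, $K^{33}-e^\star$ is $2$-connected, by \cref{lem:gadget-admissibility}(b). Second, $(K^{33},e^\star)$ is not a rooted $\theta$-chain, by \cref{lem:component-aware}. Together these give $\dfce\le-1$. I expect this matching of hypotheses to be the main obstacle. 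Everything else is bookkeeping, but this step is where the constructive estimate could in principle be weaker than the existential one.

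\emph{Back to $G$.} Then apply the construction in the proof of \cref{lem:walk} to $F$. We delete $e^\star$, contract the components of $F-e^\star$ (using union--find or BFS), and take a spanning tree of the quotient $Q$ by BFS, recording one preimage edge for each tree edge. Doubling these tree edges gives a join $W$ with $|E(W)|\le n(H)+\exc(F)-3$, built in $O(n)$ time since $H$ is subcubic and so has $O(n)$ edges. Contracting $B_u,B_v$ as in \cref{lem:contraction} removes at least $\lambda$ edge-copies, by \cref{lem:four-loop-loss} in Case~B and by the analogous argument at $p$ in Case~C. The result is a connected spanning $\{s,t\}$-join of $G$. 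An Euler trail from $s$ to $t$ (Hierholzer's algorithm) has $O(n)$ edges and is found in linear time.

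\emph{Length bound.} The accounting of \cref{lem:accounting} with $d$ replaced by $\dfce$ gives the same $\rho$ values as in the proof of \cref{thm:endpoint-intro}. Integrality of the output length then yields the floor bounds, including the $-2$ sharpening in Case~D and the parity exception in Case~B via \cref{lem:parity-even}. The total running time is dominated by the $O(n^2)$ companion routine.
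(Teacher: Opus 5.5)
Your proposal is correct and follows the paper's proof essentially step for step. You rerun the case analysis of \cref{thm:endpoint-intro} with the \textsc{Scan} estimate $\dfce$ in place of $\dfc$, use only $\dfce\le-1/2$ in Cases~A, C, and~D, obtain $\dfce\le-1$ in Case~B from the last sentence of \cref{lem:wyy-constructive} via \cref{lem:gadget-admissibility}(b) and \cref{lem:component-aware}, and observe that every step except the WYY even-cover routine runs in linear time; your explicit tracking of the particular output cover $F$ through \cref{lem:walk} and the contraction loss is just a more detailed rendering of the paper's appeal to \cref{lem:accounting}.
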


\begin{algorithm}[t]
\caption{Path TSP walk on simple $2$-connected subcubic graphs}
\label{alg:pathtsp}
\footnotesize
\begin{algorithmic}[1]
\Require simple $2$-connected subcubic $G$ on $n\ge 3$ vertices; distinct $s,t\in V(G)$
\Ensure spanning $s$-$t$ walk of $G$ satisfying the bounds of \cref{thm:endpoint-intro}
\If{$st\in E(G)$}\Comment{Case~A}
  \State $(H,e)\gets(G,st)$; $(u,v)\gets(s,t)$; $B_s\gets\{s\}$, $B_t\gets\{t\}$
\ElsIf{$\deg_G(s)=\deg_G(t)=2$}\Comment{Case~D}
  \State $(H,e)\gets(G+st,\,st)$; $(u,v)\gets(s,t)$; $B_s\gets\{s\}$, $B_t\gets\{t\}$
\ElsIf{$\{\deg_G(s),\deg_G(t)\}=\{3,2\}$}\Comment{Case~C; relabel so $\deg_G(s)=3$}
  \State split $s$ into $s_1,s_2$, add $e^\star=s_2t$, subdivide $s_1s_2$ at $p$, to get $K'$
  \State $(H,e)\gets(K',e^\star)$; $(u,v)\gets(s_2,t)$; $B_s\gets\{s_1,p,s_2\}$, $B_t\gets\{t\}$
\Else\Comment{Case~B, $\{3,3\}$}
  \State build $K^{33}$ with the component-aware split (\cref{lem:component-aware}), root edge $e^\star=s_2t_2$, subdivisions $p,q$
  \State $(H,e)\gets(K^{33},e^\star)$; $(u,v)\gets(s_2,t_2)$; $B_s\gets\{s_1,p,s_2\}$, $B_t\gets\{t_1,q,t_2\}$
\EndIf
\State $F\gets$ WYY even-cover routine on $(H,e)$\Comment{\cref{lem:wyy-constructive}}
\State $F'\gets F-e$; $M\gets$ \emph{component-contraction multigraph} of $(H-e)/F'$, one vertex per $F'$-component (\cref{lem:walk})
\State $T_M\gets$ spanning tree of the loopless part of $M$; lift each $T_M$-edge to a preimage in $E(H-e)\setminus E(F)$, forming $T_H$; $W\gets F'\cup 2T_H$
\State $W_G\gets W$ with $B_s\mapsto s$, $B_t\mapsto t$, loops deleted\Comment{\cref{lem:contraction}; singleton blocks in Cases~A,~D}
\State \Return Euler trail from $s$ to $t$ of $W_G$ via Hierholzer
\end{algorithmic}
\end{algorithm}

\begin{proof}
Apply the analysis in the proof of \cref{thm:endpoint-intro} with the
\textsc{Scan} bound $\dfce$ in place of $\dfc$
(\cref{lem:accounting}).  Cases~A, C, and~D use only the universal
estimate $\dfce\le-1/2$.  In Case~B, $K^{33}-e^\star$ is
$2$-connected (\cref{lem:gadget-admissibility}(b)) and
$(K^{33},e^\star)$ is not a rooted $\theta$-chain
(\cref{lem:component-aware}), so \cref{lem:wyy-constructive} gives
$\dfce(K^{33},e^\star)\le-1$, the value of $d$ used in the proof of
\cref{thm:endpoint-intro}.  All steps except the WYY procedure take
linear time, so the algorithm runs in $O(n^2)$.
\end{proof}

\subsection{Tightness of the \texorpdfstring{$5/4$}{5/4} Coefficient}\label{sec:tightness}

The $5/4$ coefficient in \cref{thm:endpoint-intro} is best possible
already in the cubic special case $n_2(G)=0$.  Let $\TSPtour(G)$ denote
the minimum length of a spanning closed walk in $G$.

\begin{proof}[Proof of \cref{thm:tight-intro}]
Dvo\v{r}\'ak, Kr\'al', and Mohar~\cite{DKM17} construct infinitely
many simple $2$-connected cubic graphs whose tour optimum satisfies
$\TSPtour(G)\ge 5n/4-2$ (see also \cite[Sec.~7]{WYY22}).  For any
edge $st\in E(G)$, adding one copy of $st$ to a spanning $s$-$t$ walk
yields a spanning closed walk.  Hence
$\OPT(G,s,t)\ge\TSPtour(G)-1\ge5n/4-3$.
These instances are cubic ($n_2(G)=0$) with adjacent terminals, so
they satisfy the hypotheses of \cref{thm:endpoint-intro}.  The claimed
asymptotic bound follows along this infinite family.
\end{proof}

\section{The Held--Karp Gap on Cubic \texorpdfstring{$3$}{3}-Edge-Connected Graphs}\label{sec:gap-lower}

In the cubic case \(n_2(G)=0\), \cref{thm:endpoint-intro} and the
elementary bound \(\LP\ge n-1\) give an upper bound on the path
Held--Karp gap.  A complementary lower-bound construction yields the
window in \cref{thm:gap-intro}.  Define
\[
  \Gamma_{\mathrm{cub}}
  := \limsup_{\substack{n\to\infty\\ n\text{ even}}}\ \,
     \sup_{\substack{G\text{ simple cubic }3\text{-edge-connected},\\
                      |V(G)|=n,\ s,t\in V(G),\ s\ne t}}
     \frac{\OPT(G,s,t)}{\LP(G,s,t)}
\]
to be the worst-case asymptotic gap over this graph class (every
cubic graph has even order).

\begin{corollary}[Gap upper bound]\label{cor:wyy-gap}
For every simple cubic \(3\)-edge-connected graph \(G\) on \(n\)
vertices and distinct \(s,t\in V(G)\),
\[
  \frac{\OPT(G,s,t)}{\LP(G,s,t)}\le\frac{\lfloor 5n/4\rfloor}{n-1}
  =\frac54+O(1/n).
\]
\end{corollary}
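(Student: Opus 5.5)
The bound is the quotient of an upper bound on the numerator and a lower bound on the denominator, so I would combine \cref{thm:endpoint-intro} with \cref{lem:lp-identity}. First I check the hypotheses of \cref{thm:endpoint-intro}. A simple cubic \(3\)-edge-connected graph \(G\) on \(n\) vertices has \(n\ge4\) and is connected. For cubic graphs, vertex- and edge-connectivity coincide. In particular, a cut vertex \(v\) in a cubic graph sends its three edges to at least two components of \(G-v\), so some component receives exactly one of them, and that edge is a bridge. Hence \(3\)-edge-connectivity, or even just bridgelessness, gives \(2\)-connectivity. Moreover \(n_2(G)=0\).

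Applying \cref{thm:endpoint-intro} with \(n_2(G)=0\) gives \(\OPT(G,s,t)\le\lfloor 5n/4\rfloor\) in every case. In the generic case the bound is \(\lfloor 5n/4\rfloor-1\), and the exceptional case gives exactly \(\lfloor 5n/4\rfloor\). For the denominator, \cref{lem:lp-identity} gives \(\LP(G,s,t)\ge n-1>0\), since \(G\) is connected. Dividing yields
\[
\frac{\OPT(G,s,t)}{\LP(G,s,t)}\le\frac{\lfloor 5n/4\rfloor}{n-1}.
\]
For the asymptotic form, note that \(\lfloor 5n/4\rfloor/(n-1)\le \tfrac54\cdot\tfrac{n}{n-1}=\tfrac54+\tfrac{5}{4(n-1)}\), which is \(\tfrac54+O(1/n)\).

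The only step needing care is the connectivity translation. It is routine, and in fact only bridgelessness is needed, so the stronger \(3\)-edge-connectivity hypothesis serves only to match the class defining \(\Gamma_{\mathrm{cub}}\). There is no real obstacle beyond this.
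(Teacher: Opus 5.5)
Your proposal is correct and matches the paper's proof, which divides the cubic bound $\OPT(G,s,t)\le\lfloor 5n/4\rfloor$ from \cref{thm:endpoint-intro} by $\LP(G,s,t)\ge n-1$ from \cref{lem:lp-identity}. Your explicit check that a bridgeless cubic graph is $2$-connected supplies a hypothesis the paper leaves implicit.
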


\begin{proof}
Divide the cubic bound \(\OPT(G,s,t)\le\lfloor 5n/4\rfloor\) of
\cref{thm:endpoint-intro} by \(\LP(G,s,t)\ge n-1\)
(\cref{lem:lp-identity}).
\end{proof}

In particular, on cubic graphs the algorithm of \cref{cor:algo-wyy}
is a \((5/4+O(1/n))\)-approximation for graphic \(s\)-\(t\) path TSP.

For the lower bound, we use the following fact: at adjacent terminals,
\(\LP=n-1\) on every simple cubic \(3\)-edge-connected graph.

\begin{lemma}[Adjacent-terminal LP certificate]\label{lem:adjacent-lp-exact}
Let \(G\) be a simple cubic \(3\)-edge-connected graph on \(n\)
vertices, and let \(st\in E(G)\).  Then \(\LP(G,s,t)=n-1\).
\end{lemma}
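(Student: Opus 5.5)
By \cref{lem:lp-identity} we already have \(\LP(G,s,t)\ge n-1\), so it suffices to exhibit a feasible point of cost exactly \(n-1\).  Since every feasible \(x\) has \(\sum_e x_e=n-1\) and \(c_e\ge1\), such a point must be supported on \(E(G)\).  We propose the uniform-type point
\[
  x_e=\tfrac23\ \text{for } e\in E(G)\setminus\{st\},\qquad x_{st}=0,
\]
corrected near the terminals to fix the degrees.

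\textbf{Choosing the point.}  With \(x_{st}=0\), the degree of \(s\) is \(4/3\), not \(1\), so we adjust instead at the edges around \(s\) and \(t\).  A cleaner choice is \(x=\tfrac23\chi^{E(G)}-\tfrac13\chi^{\{st\}}\), i.e.\ \(x_{st}=\tfrac13\) and \(x_e=\tfrac23\) otherwise.  Then
\[
  x(\delta(s))=\tfrac13+\tfrac43=\tfrac53 .
\]
That is also wrong, so plain symmetric weights do not fit.  The correct route is to write \(x=\tfrac13\sum_i \chi^{P_i}\) as an average of Hamiltonian \(s\)-\(t\) paths, or more robustly as a point in the \(s\)-\(t\) path polytope.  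We use the convex-combination characterisation: a point with degree \(1\) at \(s,t\), degree \(2\) elsewhere, and satisfying the cut constraints is exactly what we need.

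Concretely, we propose
\[
  x_e=\tfrac23 \text{ for every } e\in E(G)\setminus\delta(s)\cup\delta(t),\qquad
  x_e=\tfrac12 \text{ on the two non-\(st\) edges at } s \text{ and at } t,\qquad
  x_{st}=0 .
\]
This gives \(x(\delta(s))=1\), but the neighbours of \(s\) would then get \(\tfrac12+\tfrac43\ne2\).

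The systematic fix is to take \(y=\tfrac23\chi^{E(G)}\), which is a feasible tour (Held--Karp) point.  This holds by \(3\)-edge-connectivity: every cut contains at least \(3\) edges, so \(y(\delta(S))\ge2\), and every vertex has degree \(3\cdot\tfrac23=2\).  We then set \(x:=y-\chi^{\{st\}}\) on \(st\).  But \(y_{st}=\tfrac23<1\), so instead we lower the degrees at \(s\) and \(t\) by one by subtracting \(\tfrac13\) from each of the three edges at \(s\) and at \(t\), with \(st\) shared.  Subtracting \(\tfrac13\) from the two non-\(st\) edges at each terminal and \(\tfrac23\) from \(st\) hurts the neighbours.

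So we switch to a parity/averaging argument.  Since \(G\) is cubic and \(3\)-edge-connected, \(y\) lies in the perfect matching polytope shifted by the 2-factor relation: \(\tfrac13\chi^{E}\) is in the perfect matching polytope (Edmonds, because odd cuts have size \(\ge3\)).  We pick perfect matchings \(M_j\) with \(\tfrac13\chi^E=\sum_j\mu_j\chi^{M_j}\).  Then \(\chi^E-\chi^{M_j}\) is a 2-factor, and removing \(st\) from it when \(st\notin M_j\) yields an \(s\)-\(t\) path system.

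\textbf{Certificate.}  We will instead use the edge-average
\[
  x:=\chi^{E}-\tfrac13\chi^{E}-\chi^{\{st\}}\cdot\tfrac23 ,
\]
i.e.\ \(x_{st}=0\) and \(x_e=\tfrac23\) elsewhere, plus a correction \(z\) with \(z(\delta(v))=-\tfrac13[v\in\{s,t\}]\).  The plan is to take
\[
  z=\tfrac13\chi^{M}-\tfrac13\chi^{M'} ,
\]
where \(M,M'\) are perfect matchings with \(st\in M'\) and \(st\in M\) handled symmetrically.  Equivalently, we take \(x=\sum_j\mu_j\,(\chi^{E\setminus M_j}-\chi^{\{st\}})\) over the matchings \(M_j\not\ni st\), suitably reweighted; this lands on points of degree \(1\) at \(s,t\) and \(2\) elsewhere.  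Its cost is \(n-1\) since the support lies in \(E(G)\), so once feasibility holds, \(\LP\le n-1\).

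\textbf{Main obstacle: the cut constraints.}  We must check \(x(\delta(S))\ge2\) when \(|S\cap\{s,t\}|\ne1\) and \(x(\delta(S))\ge1\) when \(|S\cap\{s,t\}|=1\).  For a cut not separating \(s,t\), we have \(|\delta_G(S)|\ge3\), and the removed mass is at most \(\tfrac23\) on \(st\), which is not in the cut.  So we get \(x(\delta(S))\ge3\cdot\tfrac23=2\) whenever the non-\(st\) edges keep weight \(\tfrac23\); the correction at terminals must be controlled via odd-cut parity (\(|\delta(S)|\) odd means \(\ge3\), even means \(\ge4\)).  For separating cuts, \(st\in\delta(S)\), and the at least two other edges give \(\ge\tfrac43-\tfrac23\ge1\) after the correction.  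I expect the crux to be choosing the correction so that both degree equalities and these parity-based cut bounds hold simultaneously.  I would first try a \(2\)-factor decomposition through Edmonds' perfect-matching polytope, using \(3\)-edge-connectivity exactly for the odd-cut condition.
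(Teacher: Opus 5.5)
Your proposal never fixes a feasible point and never verifies the cut constraints, which are the whole content of the upper bound; you defer them as the ``crux''. Several concrete steps also fail:
\begin{itemize}
\item The correction $z=\tfrac13\chi^{M}-\tfrac13\chi^{M'}$ is a difference of two perfect matchings. So $z(\delta(v))=0$ at every vertex, and it cannot supply the $-\tfrac13$ needed at $s$ and $t$.
\item Your cut analysis assumes the non-$st$ edges keep weight $\tfrac23$, which no point with $x(\delta(s))=1$ allows. Also, the separating-cut estimate $\tfrac43-\tfrac23\ge1$ is false.
\item The point $\sum_j\nu_j(\chi^{E(G)\setminus M_j}-\chi^{\{st\}})$ over perfect matchings $M_j\not\ni st$ has the right degrees and cost $n-1$. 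But its feasibility depends on the weights $\nu_j$, which you leave open. For a single matching it is a $2$-factor minus $st$. If that $2$-factor is not a Hamiltonian cycle, the vertex set $S$ of any one of its cycles has $x(\delta(S))=0$ although $|S\cap\{s,t\}|\ne1$.
\end{itemize}

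The missing idea is to start from a tour Held--Karp point $y$ with $y_{st}=1$. Then $x:=y-\chi^{\{st\}}$ is nonnegative, has degree $1$ exactly at $s,t$, and lowers only the $s$--$t$ cuts, each by exactly one.

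The paper takes one perfect matching $M\ni st$, which exists since $G$ is bridgeless cubic, and sets $y:=\chi^M+\tfrac12\chi^{E(G)\setminus M}$. Cuts with $|\delta_G(S)|\ge4$ get $y$-value at least $2$. If $|\delta_G(S)|=3$, then $|S|$ is odd, so $k:=|M\cap\delta_G(S)|$ is odd and $y(\delta(S))=(3+k)/2\ge2$.

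Your Edmonds route can be completed the same way. From $\tfrac13\chi^{E(G)}=\sum_j\mu_j\chi^{M_j}$, give each $M_j\not\ni st$ the weight $\nu_j:=\tfrac32\mu_j$; these sum to $1$ because $\sum_{j:\,st\in M_j}\mu_j=\tfrac13$. Put $m:=\sum_{j:\,st\notin M_j}\nu_j\chi^{M_j}$.
\begin{itemize}
\item Since $m\le\tfrac12\chi^{E(G)}$ componentwise, cuts of size at least $4$ are handled.
\item On a $3$-cut, $|S|$ is odd, so every perfect matching meets $\delta_G(S)$. Hence $m(\delta(S))=\tfrac32\bigl(1-\sum_{j:\,st\in M_j}\mu_j|M_j\cap\delta_G(S)|\bigr)\le\tfrac32\cdot\tfrac23=1$.
\end{itemize}
So $\chi^{E(G)}-m$ is tour-feasible with value $1$ on $st$. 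This point equals the convex combination, with weights $3\mu_j$, of the paper's certificates $\chi^{M_j}+\tfrac12\chi^{E(G)\setminus M_j}$ over $M_j\ni st$. The single-matching certificate is therefore the simpler route. It needs only the fact that every edge of a bridgeless cubic graph lies in a perfect matching, not the full perfect-matching polytope.
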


\begin{proof}
The lower bound \(\LP(G,s,t)\ge n-1\) is \cref{lem:lp-identity}.  For
the upper bound, choose a perfect matching \(M\) of \(G\) containing
\(st\); such an \(M\) exists because every edge of a bridgeless cubic
graph lies in a perfect matching~\cite{Petersen1891,Plesnik72,Schoenberger34}.
For $e\in\binom{V}{2}$, set $y_e=1$ if $e\in M$, set $y_e=1/2$ if
$e\in E(G)\setminus M$, and set $y_e=0$ if $e\notin E(G)$.  Then every
vertex has $y$-degree~$2$.

For a nonempty proper set $S\subsetneq V$, put
$\delta_G(S):=\delta(S)\cap E(G)$.  We claim that
$y(\delta(S))\ge2$.  This is immediate if $|\delta_G(S)|\ge4$.  If
$|\delta_G(S)|=3$, then $|S|$ is odd because $G$ is cubic.  Moreover,
since $M$ is perfect,
$k:=|M\cap\delta_G(S)|\equiv|S|\pmod2$.  Hence $k$ is odd and
$y(\delta(S))=(3+k)/2\ge2$.  Thus $y$ is feasible for the tour
Held--Karp relaxation and $y_{st}=1$.

Let $\chi^{st}$ be the indicator vector of $st$ and set
$x:=y-\chi^{st}$.  It decreases the degrees of $s,t$ and the value of
every $s$--$t$ cut by one, leaving all other cut values unchanged.
Therefore $x$ satisfies every path Held--Karp constraint.  Its cost is
$|M|+\tfrac12|E(G)\setminus M|-1=n-1$.
\end{proof}

Thus adjacent terminal pairs attain the minimum LP value, which
transfers asymptotic lower bounds for cubic tours to the path setting.

\begin{theorem}[\(\Gamma_{\mathrm{cub}}\ge 9/8\)]\label{thm:gap-lower}
The asymptotic path Held--Karp gap on simple cubic
\(3\)-edge-connected graphs is at least \(9/8\).
\end{theorem}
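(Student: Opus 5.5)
We will prove \cref{thm:gap-lower} by exhibiting an infinite family of simple cubic \(3\)-edge-connected graphs \(G\) with adjacent terminals \(st\in E(G)\) for which \(\OPT(G,s,t)/\LP(G,s,t)\to 9/8\). The denominator is fixed exactly by \cref{lem:adjacent-lp-exact}: at adjacent terminals \(\LP(G,s,t)=n-1\). So everything reduces to a lower bound on \(\OPT(G,s,t)\) of the form \(9n/8-O(1)\).

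\emph{Source of the lower bound.} We invoke the cubic tour examples of Luko\v{t}ka--Maz\'ak~\cite{LM18}. As we read their result, it gives infinitely many simple cubic \(3\)-edge-connected graphs \(G\) with \(\TSPtour(G)\ge 9n/8-O(1)\), for some absolute constant hidden in the \(O(1)\). We then transfer this bound to paths exactly as in the proof of \cref{thm:tight-intro}. Fix any edge \(st\in E(G)\). Appending one copy of \(st\) to a spanning \(s\)-\(t\) walk produces a spanning closed walk, so \(\OPT(G,s,t)\ge\TSPtour(G)-1\ge 9n/8-O(1)\).

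\emph{Combining.} Dividing the two bounds gives
\[
  \frac{\OPT(G,s,t)}{\LP(G,s,t)}\;\ge\;\frac{9n/8-O(1)}{n-1}\;\longrightarrow\;\frac98 .
\]
Since the family is infinite and every member is simple, cubic and \(3\)-edge-connected, the inner supremum in the definition of \(\Gamma_{\mathrm{cub}}\) is at least this ratio for infinitely many even \(n\). Taking \(\limsup\) yields \(\Gamma_{\mathrm{cub}}\ge 9/8\).

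\emph{Main obstacle.} The delicate step is making sure the cited family really lies in the class used to define \(\Gamma_{\mathrm{cub}}\). Each graph must be simple, cubic and \(3\)-edge-connected, not merely \(2\)-connected. This matters because the exact LP value from \cref{lem:adjacent-lp-exact} relies on \(3\)-edge-connectivity. The cited bound must also be a statement about the integral tour optimum \(\TSPtour\), not about a ratio to the tour LP. If \cite{LM18} instead states a tour Held--Karp gap of \(9/8\), we would fall back as follows. We would note that for cubic \(3\)-edge-connected graphs the tour LP equals \(n\), via the all-\(2/3\) solution together with \(\sum_e x_e=n\) and \(c_e\ge1\). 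Their gap statement therefore already gives \(\TSPtour(G)\ge 9n/8-o(n)\), which suffices after dividing by \(n-1\).
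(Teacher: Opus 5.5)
Your proposal is correct and follows the paper's proof essentially step for step: the Luko\v{t}ka--Maz\'ak family, the transfer $\OPT(G,s,t)\ge\TSPtour(G)-1$ at an edge $st$, and the exact value $\LP(G,s,t)=n-1$ from \cref{lem:adjacent-lp-exact}, followed by taking the $\limsup$. The paper settles your class-membership concern by noting that the graphs of~\cite{LM18} are $3$-vertex-connected, hence $3$-edge-connected. It uses their bound in the form $\TSPtour(G_k)\ge(9/8-o(1))n_k$, which your division argument handles just as well as an $O(1)$ error term.
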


\begin{proof}
Luko\v{t}ka and Maz\'ak construct a sequence of simple
\(3\)-vertex-connected
cubic graphs \(G_k\), with \(n_k:=|V(G_k)|\to\infty\), whose graphic
tour optimum satisfies \(\TSPtour(G_k) \ge (9/8-o(1))n_k\)
\cite{LM18}.  Since \(3\)-vertex-connected cubic graphs are
\(3\)-edge-connected,
these graphs lie in our class.  Fix any edge \(s_k t_k\in E(G_k)\); any
connected spanning $\{s_k,t_k\}$-join becomes a spanning Eulerian
multigraph after adding one copy of \(s_k t_k\), so
\(\OPT(G_k,s_k,t_k)\ge \TSPtour(G_k)-1\ge (9/8-o(1))n_k-1\).  By
\cref{lem:adjacent-lp-exact}, \(\LP(G_k,s_k,t_k)=n_k-1\), so
\[
  \frac{\OPT(G_k,s_k,t_k)}{\LP(G_k,s_k,t_k)}
  \ge \frac{(9/8-o(1))n_k-1}{n_k-1} = \frac98-o(1).
\]
Taking the limsup gives the claim.
\end{proof}

\Cref{thm:gap-intro} follows: the lower bound is \cref{thm:gap-lower}
and the upper bound is \cref{cor:wyy-gap}.

\section{Consequences and Open Problems}\label{sec:discussion}

We proved a fixed-endpoint walk-length theorem on simple $2$-connected
subcubic graphs: the bound $\lfloor(5n+n_2(G))/4\rfloor-1$ holds for
every terminal pair outside the exceptional case of
\cref{thm:endpoint-intro}, where we obtain $\lfloor(5n+n_2(G))/4\rfloor$,
and its coefficient $5/4$ is best possible already when $n_2(G)=0$.
On cubic $3$-edge-connected graphs the same bound confines the path
Held--Karp integrality gap to $[9/8,5/4]$ (\cref{thm:gap-intro}), by a
direct argument that does not use the general path-to-tour reduction.
Whether $\lfloor(5n+n_2(G))/4\rfloor-1$ also holds in the exceptional
case is open; it would follow from $\dfc(K^{33},e^\star)\le-3/2$ for a
suitable split, and the split matters (\cref{app:twelve} gives a
$12$-vertex instance with admissible splits of deficit $-1$ and $-3$).

\paragraph*{Beyond Subcubic.}
A natural open question is whether a similar fixed-endpoint
walk-length theorem holds on simple $2$-connected graphs of
maximum degree $4$; both the WYY excess theorem and our
endpoint-conversion construction rely on the subcubic assumption.
Stronger connectivity may force shorter walks: every $4$-connected
planar graph is Hamiltonian-connected~\cite{Tho83}, so
$\OPT(G,s,t)=n-1$ on that class.

\section*{Disclosure of Interests}
The author has no competing interests to declare that are relevant to
the content of this article.

\appendix

\section{A Twelve-Vertex Instance}\label{app:twelve}

Let $G$ be the graph with vertex set
$\{s,t,z_1,z_2,x,y,a,b,c,a',b',c'\}$ and edge set
\[
  \{sa,\,sb,\,sc,\ \ ax,\,bx,\,xz_1,\ \ z_1z_2,\,z_1y,\ \
    ya',\,yb',\,a't,\,b't,\ \ tc',\,c'z_2,\,cz_2\}.
\]
Then $G$ is simple, $2$-connected, and subcubic with $n=12$ and
$n_2(G)=6$ (the degree-$2$ vertices are $a,b,c,a',b',c'$); the
terminals $s,t$ are non-adjacent with $\deg_G(s)=\deg_G(t)=3$, and
$n+n_2(G)=18\equiv2\pmod4$, so $(G,s,t)$ lies in the exceptional
case of \cref{thm:endpoint-intro}.  Moreover $G-\{s,t\}$ is
connected, so every assignment of terminal incidences satisfies the
component condition of \cref{lem:component-aware}, and every
resulting rooted instance $(K^{33},e^\star)$ is admissible and not a
rooted $\theta$-chain
(\cref{lem:gadget-admissibility,lem:component-aware}).

The $\{3,3\}$ construction leaves free which neighbour of $s$ is
reattached to $s_2$ and which neighbour of $t$ is reattached to
$t_2$; there are nine such splits.  Exhaustive enumeration of the
even covers of $K^{33}$ through $e^\star$ shows that for the split
$(s_2\leftarrow c,\ t_2\leftarrow c')$ the minimum excess is $7$,
giving
\[
  \dfc(K^{33},e^\star)=(7-2)-\tfrac{16+8}{4}=-1,
\]
whereas for each of the other eight splits the minimum excess is
$5$, giving $\dfc(K^{33},e^\star)=-3$.  Thus the refinement
$\dfc\le-1$ of \cref{thm:wyy} is tight on a fixed admissible split,
and no argument that treats all admissible splits alike can certify
$\dfc\le-3/2$; only a suitable choice of split attains $-3$.  For
the split $(s_2\leftarrow c,\ t_2\leftarrow c')$ the instance has
the same shape as the graph $G_1$ of the remark following
\cref{thm:wyy}: the vertices $s_2,t_2,z_1,z_2$ are joined by
$e^\star$, the edge $z_1z_2$, and four internally disjoint subcubic
chains, so contracting each chain to a single edge yields $K_4$.

Finally, $\OPT(G,s,t)=14$: the walk
\[
  s,\,b,\,x,\,a,\,s,\,c,\,z_2,\,z_1,\,y,\,b',\,y,\,a',\,t,\,c',\,t
\]
has length $14$, and exhaustive search over connected spanning
$\{s,t\}$-joins shows that none is shorter.  Since
$14\le15=\lfloor(5n+n_2(G))/4\rfloor-1$, the instance is consistent
with the open question of whether the bound
$\lfloor(5n+n_2(G))/4\rfloor-1$ extends to the exceptional case.

\section{The Scan Estimate}\label{app:scan}

This appendix records why the estimate $\dfce$ of
\cref{lem:wyy-constructive} satisfies $\dfce(G,e)\le-1$ whenever
$G-e$ is $2$-connected and $(G,e)$ is not a rooted $\theta$-chain.
In the notation of~\cite{WYY22}, $\dfce(G,e)$ is the value
$\Delta(G,e)$ computed by \textsc{Scan}~\cite[Alg.~1]{WYY22}, and
the even-cover routine $\mathtt{EC}(G,e,\Delta)$ of~\cite[\S6]{WYY22}
is organised into exactly three branches according to
$\Delta(G,e)\in\{-\tfrac12,-1,-\tfrac32\}$.  The input specification
of the branch $\Delta=-\tfrac12$~\cite[Alg.~4]{WYY22} states that,
for a root edge with $G-e$ simple and $2$-connected, this branch
arises precisely when $(G,e)$ is a rooted $\theta$-chain.  Hence,
when $(G,e)$ is not a rooted $\theta$-chain, the computed value
satisfies $\Delta(G,e)\le-1$, which is the last sentence of
\cref{lem:wyy-constructive}.

\section{Citation Table for \texorpdfstring{\cite{WYY22}}{Wigal--Yoo--Yu}}\label{app:citations}

\Cref{tab:wyy-citations} lists the statements of~\cite{WYY22} used
in this paper and where each enters the argument.

\begin{table}[H]
\centering\footnotesize
\begin{tabular}{@{}ll@{}}
\toprule
Statement in~\cite{WYY22} & Role in this paper \\
\midrule
\S2: even covers, $\exc$, $\mathcal E(G,e)$, deficit
  & conventions of \cref{sec:wyy} \\
\S2, Fig.~1: subcubic chains; rooted $\theta$-chains
  & definitions before \cref{thm:wyy} \\
Thm.~2.4~(T1): deficit $\le-\tfrac12$ with equality case
  & \cref{thm:wyy} \\
Thm.~2.4~(T3): instances of deficit $-1$
  & remark following \cref{thm:wyy} (not used) \\
Alg.~1 (\textsc{Scan}) and its analysis in \S6
  & estimate $\dfce$ in \cref{lem:wyy-constructive} \\
Alg.~2 and the $\mathtt{EC}$ branches of \S6
  & even-cover output; \cref{app:scan} \\
Cor.~6.8: consolidated correctness and $O(n^2)$ time
  & \cref{lem:wyy-constructive} \\
\S7: discussion of the cubic lower bound
  & proof of \cref{thm:tight-intro} \\
\bottomrule
\end{tabular}
\caption{Statements of~\cite{WYY22} used in this paper, ordered by
where they enter the argument.}
\label{tab:wyy-citations}
\end{table}

\section{Implementation Notes for \texorpdfstring{\cref{alg:pathtsp}}{Algorithm 1}}\label{app:impl}

We record the data-structure choices behind the $O(n^2)$ bound of
\cref{cor:algo-wyy}; the number of edges is $O(n)$ throughout
because all graphs involved are subcubic.  The graph is stored as
adjacency lists with edge identifiers, so splitting a terminal,
adding $e^\star$, and subdividing the split edges (Cases~B and~C)
are $O(n)$ pointer operations.  The component-aware split of
\cref{lem:component-aware} needs the components of $G-\{s,t\}$ and,
for each component, one edge to $s$ and one to $t$; a single
depth-first search finds them in $O(n)$ time.  The calls to
\textsc{Scan} and to the even-cover routine of
\cref{lem:wyy-constructive} take $O(n)$ and $O(n^2)$ time,
respectively~\cite[Cor.~6.8]{WYY22}.  Building the
component-contraction multigraph of \cref{lem:walk}, extracting a
spanning tree, and lifting its edges to preimages are again one
depth-first search over $O(n)$ edges, and the resulting join has
$n+\exc(F)-3=O(n)$ edge-copies.  The final contraction of $B_s,B_t$
and the Euler trail (Hierholzer's algorithm) are linear.  Hence
every step except the even-cover routine runs in $O(n)$ time, and
the total is $O(n^2)$.

\end{document}